\definecolor{mylinkcolor}{rgb}{0,0,0.8} 
\newtheorem{theorem}{Theorem}
\newtheorem{lemma}{Lemma}
\newtheorem{corollary}{Corollary}
\theoremstyle{definition}
\newtheorem{definition}{Definition}
\newtheorem{remark}{Remark}
\DeclareFontShape{OT1}{cmr}{bx}{sc}{<-> cmbcsc10}{}
\DeclareMathAlphabet\mathbfcal{OMS}{cmsy}{b}{n}
\newcommand{\freq}{\mathrm{freq}}
\newcommand{\ketbra}[2]{\ket{#1}\!\!\bra{#2}}
\newcommand{\tr}{\mathrm{Tr}}
\newcommand{\Max}{\mathrm{Max}}
\newcommand{\Min}{\mathrm{Min}}
\newcommand{\var}{\mathrm{Var}}
\newcommand{\varq}{\mathrm{Var}_p}
\newcommand{\rate}{\mathrm{rate}}
\newcommand{\rateopt}{\mathrm{rate}_{\mathrm{opt}}}
\newcommand{\DIRNG}{\textsc{DIRNG}\xspace}
\newcommand{\DIRNE}{\textsc{DIRNE}\xspace}
\newcommand{\A}{A}
\newcommand{\B}{B}
\newcommand{\X}{X}
\newcommand{\Csys}{D}
\newcommand{\ceil}[1]{\lceil#1\rceil}
\newcommand{\floor}[1]{\lfloor#1\rfloor}
\newcommand{\wexp}{\omega_{\mathrm{exp}}}
\newcommand{\epsmooth}{\epsilon_{h}}
\newcommand{\epsound}{\epsilon_{\mathcal{S}}}
\newcommand{\epcomplete}{\epsilon_{\mathcal{C}}}
\newcommand{\epext}{\epsilon_{\mathrm{EXT}}}
\newcommand{\epeat}{\epsilon_{\mathrm{EAT}}}
\newcommand{\Q}{\mathcal{Q}}
\newcommand{\Qgamma}{\Q^{\gamma}}
\renewcommand{\P}{\mathcal{P}}
\newcommand{\M}{\mathcal{M}}
\newcommand{\N}{\mathcal{N}}
\renewcommand{\S}{\mathcal{S}}
\begin{document}

\title{Device-independent randomness expansion against quantum side information}

\author{Wen-Zhao Liu}
\author{Ming-Han Li}
\affiliation{Hefei National Laboratory for Physical Sciences at Microscale and Department of Modern Physics, University of Science and Technology of China, Hefei 230026, P.~R.~China}
\affiliation{Shanghai Branch, CAS Center for Excellence and Synergetic Innovation Center in Quantum Information and Quantum Physics, University of Science and Technology of China, Shanghai 201315, P.~R.~China}
\affiliation{Shanghai Research Center for Quantum Sciences, Shanghai 201315, China}

\author{Sammy Ragy}
\affiliation{Department of Mathematics, University of York, Heslington, York YO10 5DD, United Kingdom}
\author{Si-Ran Zhao}
\author{Bing Bai}
\author{Yang Liu}
\affiliation{Hefei National Laboratory for Physical Sciences at Microscale and Department of Modern Physics, University of Science and Technology of China, Hefei 230026, P.~R.~China}
\affiliation{Shanghai Branch, CAS Center for Excellence and Synergetic Innovation Center in Quantum Information and Quantum Physics, University of Science and Technology of China, Shanghai 201315, P.~R.~China}
\affiliation{Shanghai Research Center for Quantum Sciences, Shanghai 201315, China}

\author{Peter J. Brown}
\affiliation{Department of Mathematics, University of York, Heslington, York YO10 5DD, United Kingdom}

\author{Jun Zhang}
\affiliation{Hefei National Laboratory for Physical Sciences at Microscale and Department of Modern Physics, University of Science and Technology of China, Hefei 230026, P.~R.~China}
\affiliation{Shanghai Branch, CAS Center for Excellence and Synergetic Innovation Center in Quantum Information and Quantum Physics, University of Science and Technology of China, Shanghai 201315, P.~R.~China}
\affiliation{Shanghai Research Center for Quantum Sciences, Shanghai 201315, China}

\author{Roger Colbeck}
\affiliation{Department of Mathematics, University of York, Heslington, York YO10 5DD, United Kingdom}

\author{Jingyun Fan}
\affiliation{Hefei National Laboratory for Physical Sciences at Microscale and Department of Modern Physics, University of Science and Technology of China, Hefei 230026, P.~R.~China}
\affiliation{Shanghai Branch, CAS Center for Excellence and Synergetic Innovation Center in Quantum Information and Quantum Physics, University of Science and Technology of China, Shanghai 201315, P.~R.~China}
\affiliation{Shanghai Research Center for Quantum Sciences, Shanghai 201315, China}
\affiliation{Shenzhen Institute for Quantum Science and Engineering and Department of Physics, Southern University of Science and Technology, Shenzhen, 518055, P.~R.~China }

\author{Qiang Zhang}
\author{Jian-Wei Pan}
\affiliation{Hefei National Laboratory for Physical Sciences at Microscale and Department of Modern Physics, University of Science and Technology of China, Hefei 230026, P.~R.~China}
\affiliation{Shanghai Branch, CAS Center for Excellence and Synergetic Innovation Center in Quantum Information and Quantum Physics, University of Science and Technology of China, Shanghai 201315, P.~R.~China}
\affiliation{Shanghai Research Center for Quantum Sciences, Shanghai 201315, China}

\begin{abstract}
The ability to produce random numbers that are unknown to any outside party is crucial for many applications. Device-independent randomness generation~\cite{ColbeckThesis,CK2,PAMBMMOHLMM,ADFRV} does not require trusted devices and therefore provides strong guarantees of the security of the output, but it comes at the price of requiring the violation of a Bell inequality for implementation. A further challenge is to make the bounds in the security proofs tight enough to allow randomness expansion with contemporary technology. Although randomness has been generated in recent experiments~\cite{Liu1,Shen,QRNG1,Liu2,QRNG2}, the amount of randomness consumed in doing so has been too high to certify expansion based on existing theory. Here we present an experiment that demonstrates device-independent randomness expansion~\cite{ColbeckThesis,CK2,PAMBMMOHLMM,Fehr13,coudron2014infinite,MS1,miller2017universal,VV,BRC}. By developing a Bell test setup with a single-photon detection efficiency of around $84\%$ and by using a spot-checking protocol, we achieve a net gain of $2.57\times10^8$ certified bits with a soundness error $3.09\times10^{-12}$. The experiment ran for $19.2$~h, which corresponds to an average rate of randomness generation of $13,527$ bits per second. By developing the entropy accumulation theorem~\cite{DFR,ADFRV,DF}, we establish security against quantum adversaries. We anticipate that this work will lead to further improvements that push device-independence towards commercial viability.
\end{abstract}
\maketitle

According to quantum theory, measurement outcomes are in general unpredictable, even to observers who possess quantum devices. Quantum processes have therefore been studied extensively as a source of randomness~\cite{acin2016certified,herrero2017quantum}. In a typical quantum random number generator, the user relies on the device working in a particular way, for instance, by detecting single photons after they pass through a 50:50 beam splitter. Deviations in the device behaviour affect the randomness of the outputs and are difficult to detect. Furthermore, any real device will be too complicated to model in its entirety, which leaves open the possibility that an adversary can exploit a feature of the device outside the model, as has been seen in quantum key distribution~\cite{GLLSKM}. To circumvent this, device-independent protocols have been introduced, which are proven to be secure without any assumptions about the devices used. This leads to a substantially higher level of security because any problems caused by unmodelled features are removed.

Experimental device-independent randomness generation (\DIRNG) has been improving at a rapid pace. Some previous studies required additional assumptions~\cite{PAMBMMOHLMM,Liu1,Shen,QRNG1}, and even the most advanced to date~\cite{QRNG2,Liu2} consumed more randomness than they generated. Therefore, randomness expansion, which is a quantum feature without classical counterpart, remained elusive and technically challenging. For example, with our previous experimental setup~\cite{Liu2}, almost $118,000$ experimental hours (at a repetition rate of $200$~kHz) would be required to achieve randomness expansion with the protocol presented below, putting it out of reach in practice.

In this Letter we report the experimental realization of device-independent randomness expansion (\DIRNE) with high statistical confidence, the success of which is based on substantial improvements on both the theoretical and the experimental sides. We derive a tighter bound on entropy accumulation in the randomness generation process and construct a photonic entanglement platform to violate the Clauser-Horne-Shimony-Holt (CHSH) inequality~\cite{CHSH} as much as possible. The significance of this work is twofold in that it advances both our understanding of randomness and our experimental quantum optical capabilities. Such improvements bring us closer to being able to realise a number of other critical quantum information tasks such as device-independent quantum key distribution~\cite{murta2019towards}.

The entropy accumulation theorem (EAT)~\cite{DFR,ADFRV,DF} provides relatively tight bounds on the amount of randomness that can be extracted against an adversary that is limited only by quantum theory. Roughly speaking, the EAT shows that in an $n$-round protocol that achieves a CHSH game score of $\omega$, the amount of output randomness is lower-bounded by
\begin{align}\label{eq:1}
\mathrm{rand}_{\mathrm{out}}\ge n h(\omega)-\sqrt{n} v\,,
\end{align}
where $h(\omega)$ is the worst-case von Neumann entropy of an individual round of the protocol with expected score $\mathrm{\omega}$, and $v$ is a correction factor accounting for the finite statistics. The score on round $i$ is $\frac{1}{2}(1+(-1)^{A_i\oplus B_i\oplus(X_i\cdot Y_i)})$, where $A_i$ and $B_i$ are measurement outcomes and $X_i$ and $Y_i$ are measurement setting choices at the two sites, with $A_i$, $B_i$, $X_i$ and $Y_i\in \{0,1\}$ (Figure~\ref{fig:CHSHprot}). By using ideas from the improved EAT~\cite{DF}, we derive a tighter lower bound on the accumulated entropy (Methods). This allows us to use a spot-checking protocol to experimentally realise randomness expansion with a state-of-art experimental quantum optical technique.

We provide a conceptual drawing of our spot-checking device-independent protocol (Figure~\ref{fig:CHSHprot}) in which the assumptions are outlined. The underlying idea is to check whether devices situated in a secure lab violate a Bell inequality; it is therefore important to ensure that the devices at both sites (labelled Alice and Bob) cannot signal to one another or to the outside of the lab. If a Bell inequality is violated while satisfying our assumptions, then the devices must be generating randomness, even relative to an adversary who may share entanglement with the devices. The generated randomness can be extracted by appropriate post-processing. In this protocol (Box~1), the initial randomness is required to decide whether a round is a test round, $T_i=1$ (with probability $\gamma$), or a generation round, $T_i=0$ (with probability $1-\gamma$). $T_i$ is then communicated to two separate sites (but not to the measurement devices). In a test round, an independent uniform random number generator at each site generates the input to each device to perform the CHSH game. A test round consumes 2 bits of randomness. In a generation round, the devices at the two sites are given the input `0'. Crucially, each measurement device learns only its own input and not whether a round was a test round or generation round.

\begin{figure}[htbp]
\centering
\resizebox{7.5cm}{!}{\includegraphics{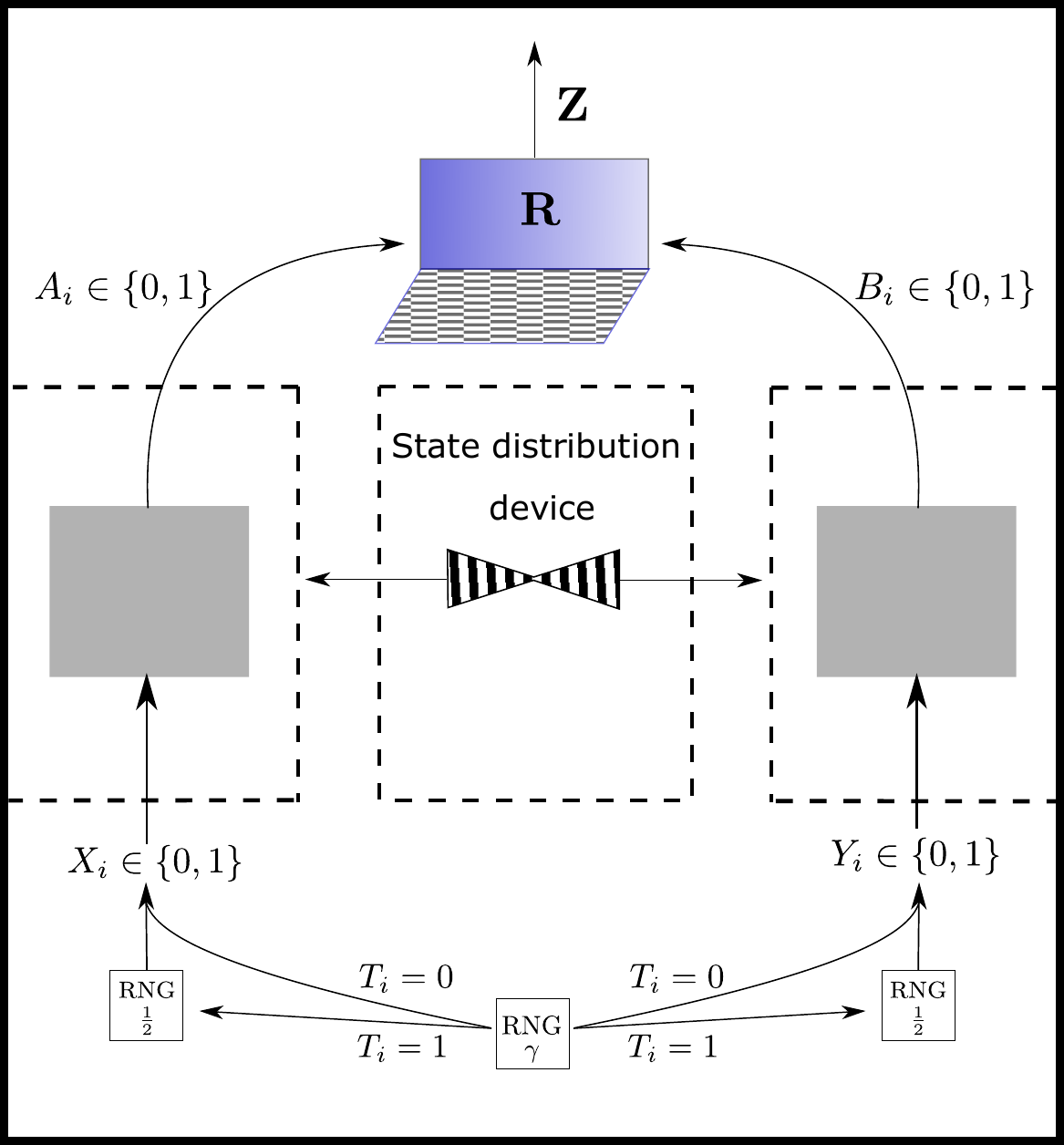}}
\caption{{\bf Conceptual sketch of the \DIRNE protocol setup.} The protocol (Box~1) takes place in a secure lab, which is shielded from direct communication to the outside. The lab contains two black-box devices that accept inputs and yield outputs from the binary alphabet $\lbrace 0,1\rbrace$ and these can be shielded from communicating at will. In particular, we assume that the user can completely control the flow of classical communication in and out of these regions (indicated by the dashed lines). In our experiment, the secure lab contains two sites, Alice and Bob. They share a pair of entangled particles which may be distributed from a central station. (If we had good enough quantum storage, then all entanglement could be pre-shared.) Alice's and Bob's respective inputs are $X_i$ and $Y_i$ and their outputs are $A_i$ and $B_i$. The user also possesses a trusted classical computer (with which to process the classical data) and sources of initial randomness. In our experiment the initial randomness is shown by an extractor seed $\mathbf{R}$ and three random number generators (RNGs) that determine the inputs to the devices. These RNGs output either $0$ or $1$, where the number in the box ($\gamma$ or $1/2$) denotes the probability of $1$. The central RNG determines the type of round ($T_i=1$ meaning test and $T_i=0$ meaning generate) and the peripheral RNGs determine the inputs if a test round is chosen. The final randomness output is denoted by $\mathbf{Z}$.}
\label{fig:CHSHprot}
\end{figure}

\begin{figure*}
\begin{tcolorbox}[title = \textbf{Box 1 : CHSH-based \DIRNE Protocol},colback=white,colframe=blue!50!white]
\begin{flushleft}
 \textbf{Arguments:}\\
 $n\in\mathbb{N}$ --  number of rounds \\
 $\gamma\in(0,1)$ --  test probability \\
 $\omega_{\mathrm{exp}}$ --  expected CHSH score given a test round \\
 $\delta>0$ -- width of the statistical confidence interval for the CHSH score \\
 $\textbf{R}$ -- random seed for the extractor
\end{flushleft}
\tcblower
\leftline{\textbf{Protocol:}}

\vspace{0cm}
\begin{enumerate}\setlength{\itemsep}{0ex}\setlength{\parsep}{0ex}
  \item For every round $i\in\lbrace 1,\dots,n\rbrace$, do steps $2-4$.
  \item Set $U_i=\perp$. Choose $T_i\in \lbrace 0,1\rbrace$ such that ${\rm Pr}(T_i=1)=\gamma$.
  \item If $T_i=0$, use the devices with inputs $(X_i,Y_i)=(0,0)$, record $A_i$, replace $B_i$ with $0$ and set $U_i=\bot$.
  \item If $T_i=1$, choose the inputs $X_i$ and $Y_i$ uniformly at random from $\{0,1\}$, record $A_i$ and $B_i$ and set $U_i=\frac{1}{2}(1+(-1)^{A_i\oplus B_i\oplus(X_i\cdot Y_i)})$.
  \item If $|\{U_i:U_i=0\}|> n\gamma(1-(\omega_{\mathrm{exp}}-\delta))$, then abort the protocol.
  \item Apply a strong quantum-proof randomness extractor to get output randomness $\mathbf{M}=\text{Ext}(\mathbf{A}\mathbf{B},\mathbf{R})$. (Because we use a strong extractor, $\mathbf{M}$ can be concatenated with $\textbf{R}$ to give $\mathbf{Z}=(\mathbf{M},\mathbf{R})$.)
\end{enumerate}
\end{tcolorbox}
\end{figure*}

We implement the protocol on a quantum optical platform (Figure~\ref{fig:setup}). Pairs of polarization-entangled photons with a wavelength of 1,560~nm are generated via spontaneous parametric down-conversion and are delivered through spatial optical paths to two sites where polarization-dependent measurements are conducted. Previously, and with space-like separation between Alice and Bob, this platform proved to be robust enough to realise loophole-free violation of a Bell inequality and \DIRNG, in which the CHSH game scores $\omega$ violated the classical bound $\omega_{\mathrm{class}}=3/4$ by $0.00027$ (ref.~\citenum{Liu2}). Under these conditions and with the same error parameters as elsewhere in this paper, it would take about $8.52\times10^{13}$ rounds of the experiment to witness randomness expansion according to our revised EAT theory (Figure~\ref{fig:EAT2}a, open square). To go beyond this, in the present work we reduced the distance between Alice and Bob by replacing the fibre links with spatial optical paths to achieve single-photon detection efficiencies of $83.40\pm0.32\%$ for Alice and $84.80\pm0.31\%$ for Bob, which enables the detection loophole to be closed in the CHSH game. Following the spot-checking protocol, a biased quantum random number generator (QRNG) is used to decide whether to test or not. Its output $T_i$ is transmitted to Alice and Bob to determine whether to use the local unbiased QRNGs in each round. When $T_i=1$, the setting choices $A_i$ and $B_i$ are determined randomly, whereas when $T_i=0$, the local unbiased QRNGs are turned off and fixed measurements are made.

Before the start of the main experiment, a systematic experimental calibration is implemented and some calculations are performed to predetermine several parameters that are mentioned in the protocol. The calibration yielded a CHSH game score of $0.752487$, and we compute that for $\gamma_{\mathrm{opt}}=3.393\times10^{-4}$, which corresponds to an average input entropy rate of $0.0049$ bits per round, randomness expansion with a soundness error (Methods) of $3.09\times10^{-12}$ can be witnessed after at least $8.951\times10^{10}$ rounds (Figure~\ref{fig:EAT2}a, cross), that is, the randomness produced in the experiment surpasses the consumed entropy after this number of rounds (Appendix Section~C.1).

In the main experiment, we set $\omega_{\exp}=0.752487$, $\delta=3.52\times10^{-4}$ and $\gamma=3.264\times10^{-4}$, and conservatively set the number of rounds to $n=1.3824\times10^{11}$, which is slightly larger than the $8.951\times10^{10}$ rounds required (see Section~C.1 of the Appendix for computation of the latter). We complete all the rounds of the experiment in $19.2$~h at a repetition rate of $2$~MHz, which is much shorter than $118,000$~h (which would have been required by the previous experiment~\cite{Liu2}). The resulting CHSH game score is $\omega_{\mathrm{CHSH}}=0.752484$, which is consistent with the value we expect (and the protocol did not abort), $|\omega_{\mathrm{CHSH}}-\omega_{\mathrm{exp}}|<\delta$. The raw experimental output has a size of $0.138 \text{Tb}$. According to the development of the EAT presented in the Appendix, the output contains at least $9.350\times10^8$ quantum-certified bits of randomness, which exceeds the amount of entropy ($6.778\times10^8$ bits) required for its generation (Methods and Appendix Section~C.1). We use a personal computer to perform a Toeplitz matrix ($0.935~\text{Gb}\times0.138~\text{Tb}$) multiplication to extract the quantum-certified random bits from the raw output. The soundness error of the final output is $3.09\times10^{-12}$.

Because a quantum-proof strong extractor is applied, the seed required for the extraction remains random after its use and is therefore not consumed~\cite{KR2011}. (Technically, the seed degrades by a very small amount, which is accounted for in the soundness error given above (Appendix Section~A.3)). Note that we do not consider the other randomness required in the protocol (that is, for choosing the test rounds and the inputs on the test rounds) to be reusable because of the possibility of some subtle attacks (see section~4.2 of ref.~\citenum{CK2}). Overall we achieve \DIRNE, gaining $2.57\times10^8$ net bits with a net rate of $1.86\times10^{-3}$ bits per round against an eavesdropper that is limited by quantum theory (Figure~\ref{fig:EAT2}b, red cross).

\begin{figure*}[htbp]
\centering
\resizebox{13cm}{!}{\includegraphics{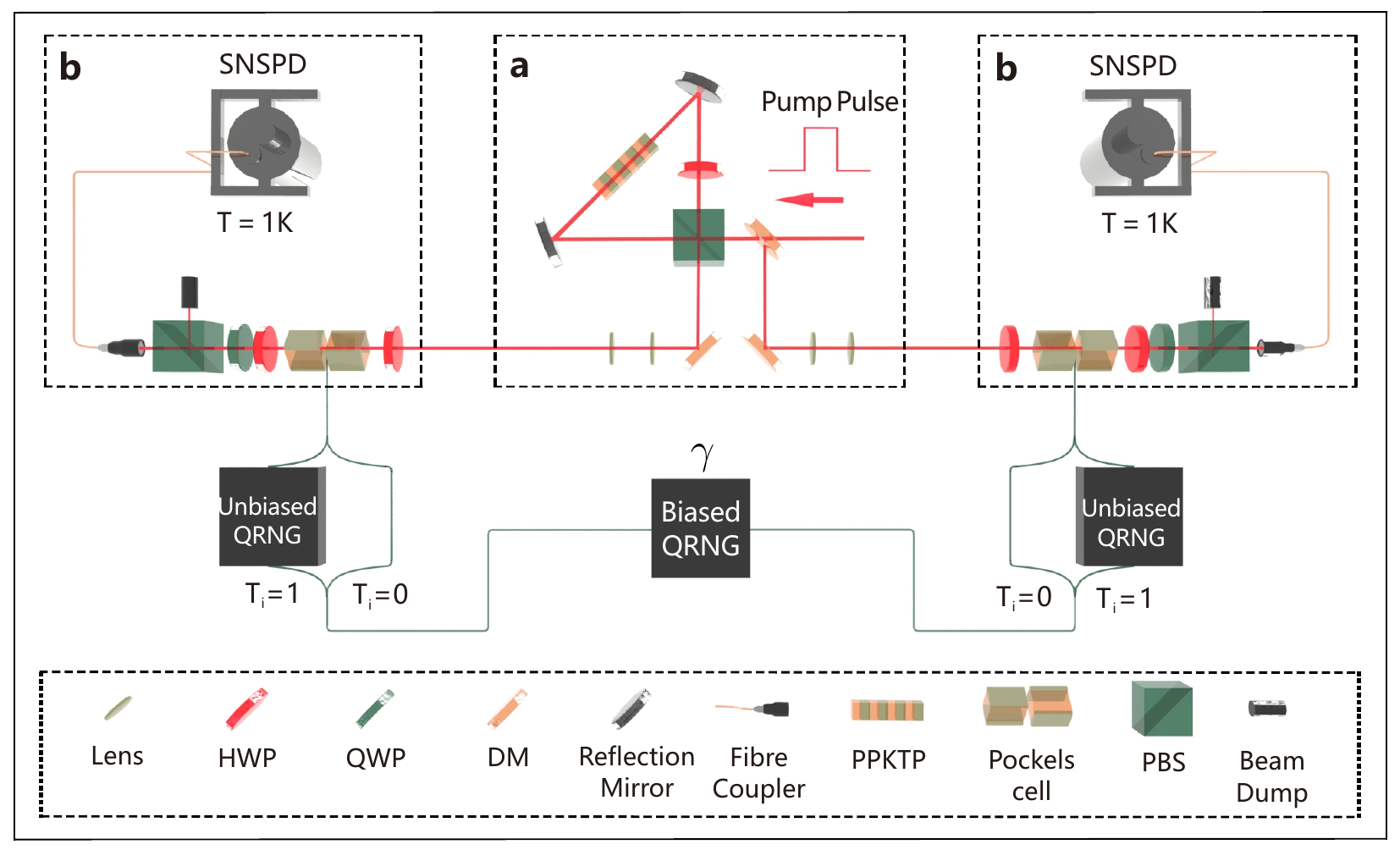}}\vspace{-0.5cm}
\caption{{\bf Schematic of the experiment.} {\bf a}, Entanglement Source, and creation of pairs of entangled photons. Light pulses of 10~ns are injected at a repetition rate of 2~MHz into a periodically poled potassium titanyl phosphate (PPKTP) crystal in a Sagnac loop to generate polarization-entangled photon pairs\cite{Liu2}. The two photons of an entangled pair at 1,560~nm travel in opposite directions to two sites, Alice and Bob, where they are subject to polarization projection measurements. {\bf b}, Alice and Bob, and single-photon polarization measurement. In the measurement sites, Alice (Bob) uses a Pockels cell to project the single photon into one of two predetermined measurement bases, and then detects single photons with a superconducting nanowire single-photon detector (SNSPD) operating at 1K. In each round, a biased QRNG in the lab creates a random bit $T_i$ with probability distribution $(\gamma, 1-\gamma)$ to determine in advance whether this round will be a test round or generation round. In test rounds, Alice and Bob each receive a random bit ``0'' or ``1'' from a local quantum random number generator (QRNG) to set the Pockels cell to zero and half-wave voltage accordingly (in generation rounds, they always use zero). DM, dichroic mirror; HWP, half-wave plate; PBS, polarizing beam splitter; QWP, quarter-wave plate.}
\label{fig:setup}
\end{figure*}

\begin{figure*}[htbp]
\centering
\resizebox{16cm}{!}{\includegraphics{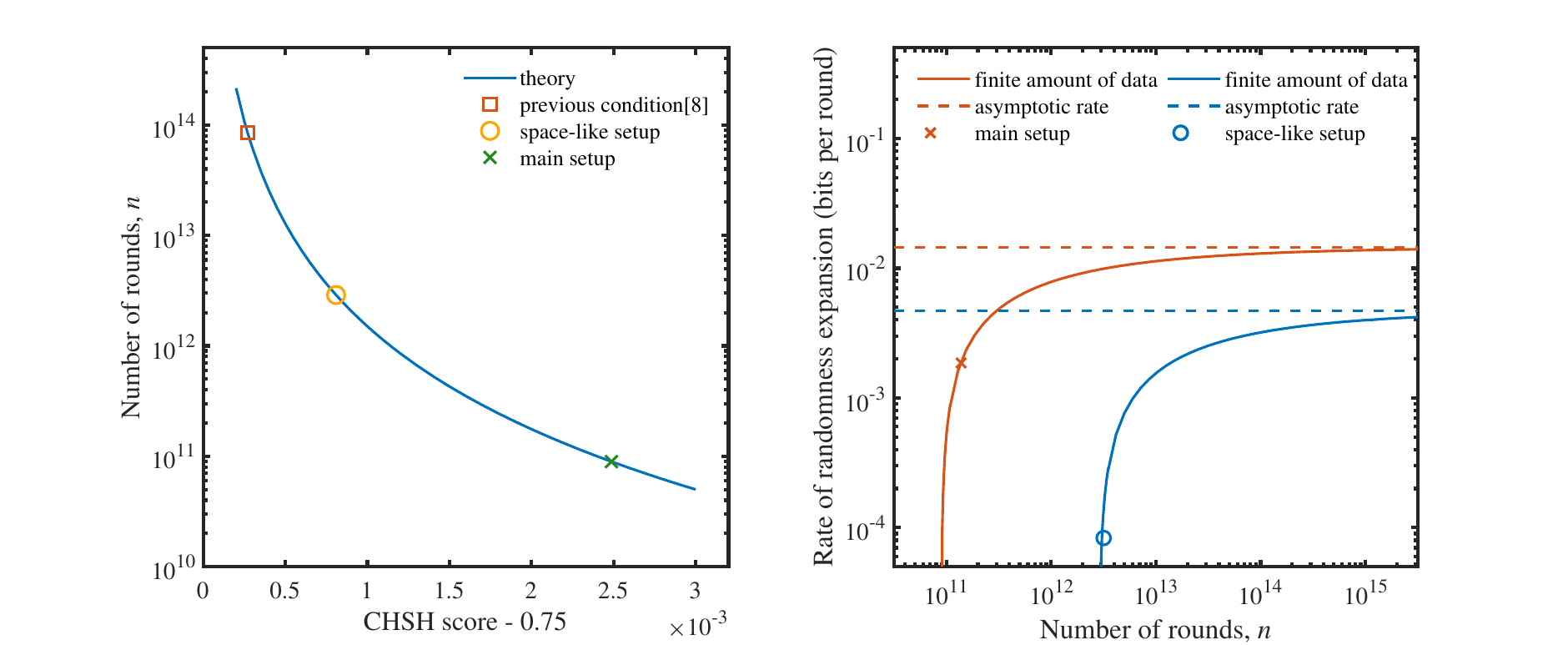}}
\caption{{\bf Rounds needed and expansion rate for the main and space-like experiments.} {\bf a}, We estimate the minimum number of experimental runs with our revised EAT theory to witness randomness expansion as a function of CHSH violation (smooth curve) with soundness error $3.09\times10^{-12}$. The red square, yellow circle and green cross indicate the previous~\cite{Liu2}, space-like and main experimental conditions, respectively. {\bf b}, We estimate the randomness expansion rate based on our revised EAT theory as a function of number of rounds (smooth line) and the asymptotic rate (dashed line) with a soundness error of $3.09\times10^{-12}$. The cross and circle indicate the experimental parameters used, red indicates the main experiment and blue indicates the space-like experiment.}
\label{fig:EAT2}
\end{figure*}

When playing the CHSH game, we close the detection loophole. Given the assumption that the devices are well shielded
, it is not necessary to close the locality loophole (Appendix Section~A.2). Considering the demanding experimental requirements to close both loopholes~\cite{hensen2015loophole,shalm2015strong,giustina2015Significant,rosenfeld2017event,LiPRL2018}, the use of shielding assumptions instead of space-like separation improves efficiency and brings \DIRNE closer to commercialization. We also remark that the randomness we generate is secure according to a composable security definition (Methods) and therefore can be used in any application that requires random numbers. Strictly, because of an issue with the composability of device-independent protocols~\cite{bckone}, without further assumption, ongoing security of the output randomness relies on the devices not being reused.

We also upgraded our previous platform~\cite{Liu2} where we closed the locality loophole to use higher-efficiency detectors, resulting in overall efficiencies of $80.41\pm0.34\%$ for Alice and $82.24\pm0.32\%$ for Bob (slightly lower than those of the main experiment). As a comparison, we analyse the performance of that setup by using the EAT framework with the same error parameters as our main experiment (Table~\ref{tab:params}).
Because of the need for additional rounds, we increased the repetition rate to $4$~MHz for this experiment. A comparison of the results (Table~\ref{tab:params}) shows that overall, we obtained $6.496\times10^{9}$ random bits within $3.168\times10^{12}$ experimental rounds, exceeding the amount of entropy ($6.233\times10^9$ bits) consumed during the protocol (Methods and Appendix Section~C.1) and gaining $2.63\times10^8$ net bits with a net rate of $8.32\times10^{-5}$ bits per round against an eavesdropper that is limited by quantum theory (Figure~\ref{fig:EAT2}b, blue open circle). The outputs of both experiments are available online \href{https://www.dropbox.com/sh/hae9ht1cc426i5g/AABcuGgGyuNJMC0zOxIhOQBGa?dl=0} {{\tt https://tinyurl.com/qssxxaq}}.

Beyond the work here, we would like protocols to have an improved rate. Robust protocols that achieve up to 2 bits of randomness per entangled qubit pair are known~\cite{BRC}. However, using such protocols to gain an experimental advantage requires improvement in the detection efficiency, which is challenging with a photonic setup. On the theory side, better rates could be achieved by developing tighter bounds on the output randomness. It would also be interesting to put into practice a protocol for randomness amplification~\cite{CR_free}, to reduce the assumption on the input randomness.\bigskip

\begin{table*}[htbp]
  \caption{{\bf Comparison between the two experiments.} 
    Here, 
    $\gamma_{\mathrm{opt}}$ is the optimal test probability to witness expansion in the minimum number of rounds $n_{\min}$ with the error parameters chosen. 
  }
  \vspace{0.2cm}
  \begin{tabular}{c|c|c}
  \hline
                                &Main experiment             &Space-like experiment       \\\hline
    Expected CHSH score ($\omega_{\exp}$)             &$0.752487$             &$0.750809$             \\
    $\gamma_{\mathrm{opt}}$     &$3.393\times10^{-4}$   &$9.851\times10^{-5}$   \\
    $n_{\min}$                  &$8.951\times10^{10}$   &$2.888\times10^{12}$   \\
    Repetition rate             &$2$~MHz                &$4$~MHz                \\
    Time taken                  &$19.2$ h           &$220$ h           \\
    Observed CHSH score ($\omega_{\mathrm{CHSH}}$)    &$0.752484$             &$0.750805$             \\
    Test probability ($\gamma$)                    &$3.264\times10^{-4}$   &$1.194\times10^{-4}$   \\
    Number of rounds ($n$)                         &$1.3824\times10^{11}$  &$3.168\times10^{12}$   \\
    Confidence width ($\delta$)                    &$3.52\times10^{-4}$    &$1.22\times10^{-4}$    \\
    Soundness error ($\epsound$)            &$3.09\times10^{-12}$   &$3.09\times10^{-12}$   \\
    Completeness error ($\epcomplete$)         &$1\times10^{-6}$       &$1\times10^{-6}$       \\
    Entropy in output           &$9.350\times10^8$ bits &$6.496\times10^9$ bits \\
    Randomness generation rate  &$13,527$ bits${\cdot}s^{-1}$ &$8,202$ bits${\cdot}s^{-1}$\\
    Entropy in input            &$6.778\times10^8$ bits &$6.233\times10^9$ bits \\
    Net gain                    &$2.57\times10^8$ bits  &$2.63\times10^8$ bits  \\\hline
  \end{tabular}
\label{tab:params}
\end{table*}

\emph{Acknowledgements.} We thank C.-L.\ Li for experimental assistance and J.-D.\ Bancal and E.\ Tan for comments on an earlier draft. This work was supported by the National Key R\&D Program of China (grant nos. 2017YFA0303900 and 2017YFA0304000), the National Natural Science Foundation of China, the Chinese Academy of Sciences, the Shanghai Municipal Science and Technology Major Project (grant no. 2019SHZDZX01), the Anhui Initiative in Quantum Information Technologies, the Guangdong Innovative and Entrepreneurial Research Team Program (grant no. 2019ZT08X324), the Key Area R\&D Program of Guangdong Province (grant no. 2020B0303010001), the Quantum Communication Hub of the Engineering and Physical Sciences Research Council (EPSRC)(grant nos. EP/M013472/1 and EP/T001011/1) and an EPSRC First Grant (grant no. EP/P016588/1). We are grateful for computational support from the University of York High Performance Computing service, {\sc Viking}, which was used for the randomness extraction.

\section{Methods}
\subsection{Security definition}
In this work we use a composable security definition~\cite{Canetti,Ben-OrMayers,PR2014}.

\begin{definition}[security]
\label{def:Security}
A protocol with an output $\mathbf{Z}$ is called $(\epsilon_{\mathcal{S}},\epsilon_{\mathcal{C}})$-secure if it satisfies the followings:

\begin{enumerate}
\item (Soundness) For an implementation of the protocol that produces $m$ bits of output we have
\begin{align}
\frac{1}{2}p_{\Omega}\|\rho_{\mathbf{Z}E|\Omega}-\tau_m\otimes\rho_{E|\Omega}\|_1\leq \epsilon_{\mathcal{S}}\,,
\end{align}
where $\tau_m$ represents a completely mixed state on $m$ qubits, $E$ represents all systems that could be held by an adversary (Eve), $\Omega$ is the event that the protocol does not abort, and $p_{\Omega}$ is the probability of this occurring. $\|\cdot \|_1$ is the trace norm. \item (Completeness) There exists an honest implementation such that $p_\Omega\geq 1-\epsilon_{\mathcal{C}}$.
\end{enumerate}
\end{definition}
The soundness error bounds the distance between the output of the protocol and that of an idealized protocol in which Eve's marginal is the same as in the real protocol, but the output is perfectly uniform and independent of Eve.

In general, the raw output of a protocol can have a lot of randomness while being easily distinguished from uniform. However, with the application of an appropriate randomness extractor, which is a classical function that takes a random seed and the raw output, an almost uniform output can be recovered. The length of this output can be taken to be roughly equal to the smooth min-entropy of the raw string that is conditioned on the side information held by Eve~\cite{Renner,KRS}.

\begin{definition}[Smooth min-entropy]
\label{def:minent}
For any classical-quantum density matrix $\rho_{AE}=\sum_a p(a)\ketbra{a}{a}\otimes\rho_E^a$ that acts on the joint Hilbert space $\mathcal{H}_{AE}$, the $\epsilon_h$-smooth min-entropy is defined by
\begin{align}
H_{\mathrm{min}}^{\epsilon_h}(A|E)_{\rho_{AE}}=\max_{\tilde{\rho}_{AE}}\left( -\log\max_{\lbrace\Pi_a\rbrace}\sum_{a} \tilde{p}(a) \tr{(\Pi^a \tilde{\rho}_E^a)}\right),
\end{align}
where the outer maximisation is over the set $\mathcal{B}^{\epsilon_h}(\rho_{AE})$ of all sub-normalized states $\tilde{\rho}_{AE}=\sum_a \tilde{p}(a)\ketbra{a}{a}\otimes \tilde{\rho}_E^a$ within purified distance~\cite{TCR2} $\epsilon_h$ of $\rho_{AE}$. Note that $\max_{\lbrace\Pi_a\rbrace}\sum_{a} \tilde{p}(a) \tr{(\Pi^a \tilde{\rho}_E^a)}$ can be interpreted as the maximum probability of guessing $A$ given access to the system $E$.
\end{definition}
The interpretation in terms of guessing probability makes clear that this quantity is a measure of unpredictability. Bounding the smooth min-entropy for a device-independent protocol is challenging. We do this by means of the entropy accumulation theorem and state an informal version that is applicable to the CHSH game below.

\subsection{Theoretical details about the protocol}
In the protocol, the user has two devices that are prevented from communicating with one another and with which the CHSH game can be played. To do so, each device is supplied with a uniformly chosen input denoted by $X,Y\in\{0,1\}$, and each produces an output, denoted by $A,B\in\{0,1\}$, respectively. The CHSH game is scored according to the function $\frac{1}{2}(1+(-1)^{A\oplus B\oplus(X\cdot Y)})$. In other words, the game is won (with a score of $1$) if $A\oplus B=X\cdot Y$ and is lost (with a score of $0$) otherwise.

At the end of the protocol the number of rounds in which the CHSH game was lost is counted and compared to $n\gamma(1-(\omega_{\mathrm{exp}}-\delta))$. The challenge in a randomness expansion protocol is to go from this to the amount of extractable randomness. For this we use the EAT, which we state informally here (note that the version we use is a development of ref.~\citenum{DF}; for more details, see the Appendix).

\begin{theorem}[Entropy accumulation, informal]
Suppose that the protocol of Box 1 is performed and that devices are such that $p_{\Omega}$ is the probability that the protocol does not abort. Let $\alpha\in(1,2)$, $\epsilon_h\in(0,1)$ and $f(s)$ be an affine lower bound on the single-round von Neumann entropy for any strategy that achieves an expected score of $s$. If the protocol does not abort, we can assume that
\begin{align}
H^{\epsilon_h}_{\min}(\mathbf{AB}|E) \geq &\ nf(\omega)+n\Delta(f,\omega)-n(\alpha-1)V(f,\gamma,\omega)\nonumber\\&-\frac{\alpha}{\alpha-1}\log\left(\frac{1}{p_{\Omega} (1-\sqrt{1-\epsilon^2_h})}\right)\nonumber\\&+n(\alpha-1)^2K_\alpha(f,\gamma),
\end{align}
where $\omega=\omega_{\mathrm{exp}}-\delta$ and the explicit forms of the functions $\Delta$, $V$ and $K_\alpha$ can be found in the Appendix.
\end{theorem}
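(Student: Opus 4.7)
The plan is to follow the improved EAT approach of Ref.~\citenum{DF} in four conceptual stages: (i) pass from smooth min-entropy to a sandwiched Rényi entropy of order $\alpha$, (ii) use a chain-rule style decomposition of the Rényi entropy into a sum of single-round contributions, (iii) bound each single-round contribution using the affine von Neumann lower bound $f$, and (iv) absorb the conditioning on the non-aborting event $\Omega$. The $V$ and $K_\alpha$ terms will emerge as second- and higher-order remainders from the Rényi-to-von-Neumann approximation, while the $\frac{\alpha}{\alpha-1}\log(1/[p_\Omega(1-\sqrt{1-\epsilon_h^2})])$ term will be the combination of the smoothing loss and the cost of conditioning on $\Omega$.

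In more detail, first I would invoke the standard inequality between smooth min-entropy and the sandwiched Rényi entropy, of the form $H^{\epsilon_h}_{\min}(\mathbf{AB}|E)_{\rho|\Omega}\ge H_\alpha(\mathbf{AB}|E)_{\rho|\Omega}-\frac{\alpha}{\alpha-1}\log\frac{1}{1-\sqrt{1-\epsilon_h^2}}$, where the purified-distance balls naturally produce the $1-\sqrt{1-\epsilon_h^2}$ factor. Next I would replace the conditional state $\rho|\Omega$ by $\rho$ using the standard bound $H_\alpha(\cdot)_{\rho|\Omega}\ge H_\alpha(\cdot)_{\rho}-\frac{\alpha}{\alpha-1}\log\frac{1}{p_\Omega}$, producing the combined $\log(1/[p_\Omega(1-\sqrt{1-\epsilon_h^2})])$ prefactor.

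Then I would apply the Dupuis--Fawzi chain rule for Rényi entropies to the sequence of EAT channels that model each round of the protocol. The chain rule gives $H_\alpha(\mathbf{AB}|E)_\rho\ge\sum_{i=1}^{n}\inf_{\sigma}H(A_iB_i|R_i)_{\mathcal{M}_i(\sigma)}+ \text{higher-order terms in }(\alpha-1)$, where the infimum runs over admissible single-round states $\sigma$ compatible with the statistics. Since the empirical score on the non-aborting event satisfies $\omega\ge\omega_{\mathrm{exp}}-\delta$ and $f$ is affine and pointwise $\le H$, linearity gives $\sum_i H(A_iB_i|R_i)\ge n f(\omega)$; the quadratic and cubic remainders in $(\alpha-1)$ from the Taylor expansion of $\log\mathrm{Tr}(\rho^\alpha)$ at $\alpha=1$ produce, respectively, the $-n(\alpha-1)V(f,\gamma,\omega)$ variance term and the $+n(\alpha-1)^2K_\alpha(f,\gamma)$ correction, whose explicit forms arise from careful estimates of the relevant Rényi divergences between the single-round Choi states and can be read off from the analysis of Ref.~\citenum{DF}.

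The main obstacle I expect is establishing the chain rule with the tightened second-order constants $V$ and $K_\alpha$ that depend explicitly on $f$ and the test probability $\gamma$, rather than on an implicit min-tradeoff function as in the original EAT of Ref.~\citenum{DFR}. Concretely, the challenge lies in showing that because $f$ is affine one can avoid taking suprema over all slopes of tangent min-tradeoff functions and can instead insert $f$ directly into the bound, provided the variance term $V$ is controlled using the spot-checking structure (test probability $\gamma$, two possible score values on test rounds, and a fixed deterministic value $\perp$ on generation rounds). Once this refined chain rule is in place, substituting $\omega=\omega_{\mathrm{exp}}-\delta$ and collecting the smoothing and conditioning corrections yields the claimed bound.
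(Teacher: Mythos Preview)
Your four-stage outline is the right skeleton, but there is a genuine gap at the point where you produce the leading term $nf(\omega)$. You remove the conditioning on $\Omega$ first and then apply the Dupuis--Fawzi chain rule to the unconditioned state, asserting that the per-round infimum ranges over states ``compatible with the statistics'' so that $\sum_i\inf_\sigma H(A_iB_i|R_i)\ge nf(\omega)$. But the chain rule of \cite{DF,DFR} gives an infimum over \emph{all} single-round input states, with no constraint tying them to the observed score; once the conditioning on $\Omega$ has been removed, nothing forces the minimising $\sigma$ to have expected score near $\omega$, and your ``linearity'' step does not go through.

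The paper (following \cite{DF}) bridges this with an ingredient you omit: an auxiliary classical register $D_i$ attached by a channel $\mathcal{D}_i$ with $H_\alpha(\tau(x)_{D_i})=\Max(f)-f(\delta_x)$. One first shows, still on the \emph{conditioned} state, that $H_\alpha^\uparrow(\mathbf{A}|\mathbf{B}E)_{\rho|\Omega}\ge H_\alpha(\mathbf{AD}|\mathbf{B}E)_{\rho|\Omega}-n\Max(f)+n\,\inf_{p\in\Omega}f(p)$; only then is the conditioning removed and the chain rule applied to $H_\alpha(\mathbf{AD}|\mathbf{B}E)_\rho$. The $nf(\omega)$ contribution thus comes from $\inf_{p\in\Omega}f(p)$, extracted \emph{before} unconditioning, and the $\Max(f)$ produced here later cancels against $H(D_i|X_i)=\Max(f)-f(p)$ inside the per-round bound. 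The paper's specific improvement over \cite{DF} is also sharper than you describe: inside the per-round infimum one keeps the gap $\Delta(f,p)=\mathrm{rate}(p)-f(p)\ge 0$ rather than discarding it, and keeps the $p$-dependence of $V(f,p)$ rather than taking $\sup_{p\in\mathcal{Q}}V(f,p)$; the resulting $\inf_{p\in\mathcal{Q}}\bigl(\Delta(f,p)-(\alpha-1)V(f,p)-(\alpha-1)^2K_\alpha(f)\bigr)$ is then shown to be a convex optimisation. This coupling of $\Delta$ and $V$ under a single infimum is the mechanism that tightens the constants, not merely the affinity of $f$.
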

When $\alpha-1\propto\frac{1}{\sqrt{n}}$, is set the subtracted terms scale with $\sqrt{n}$ whereas the leading rate term scales with $n$, which leads to the relation in equation~\eqref{eq:1} when $f(\omega)$ is a good approximation to $h(\omega)$, the worst-case von Neumann entropy for the observed score.

To produce the output string $\mathbf{M}$, we apply a strong quantum-proof randomness extractor. The reason we use a strong extractor is that the random seed, $\mathbf{R}$, required for the extractor remains random even when conditioned on the extractor's output and is therefore not consumed. This means that $\mathbf{M}$ can be concatenated with the extractor seed $\mathbf{R}$ to give output $\mathbf{Z}=(\mathbf{M},\mathbf{R})$. (see Appendix for a more detailed discussion of the extraction). Importantly, the length of the output (excluding the recycled seed) will be roughly $\mathrm{rand}_{\mathrm{out}}\approx H^{\epsilon_h}_{\min}(\mathbf{A}\mathbf{B}|E)$. We need this to be greater than the randomness consumed.

\begin{remark}[Input randomness]
The expected input randomness, $\mathrm{rand}_{\mathrm{in}}$ of the protocol in Box~1 is
\begin{align}
\mathrm{rand}_{\mathrm{in}}=n(H_{\mathrm{bin}}(\gamma)+2\gamma)+2\,,
\end{align}
where $H_{\mathrm{bin}}$ denotes the binary Shannon entropy. The contribution $H_{\mathrm{bin}}(\gamma)$ comes from the selection of the test rounds and $2\gamma$ from the selection of the input bits for the CHSH game. The interval algorithm~\cite{HaoHoshi} can be used to turn uniform random bits to biased ones at the claimed rate.
\end{remark}

We do not include the randomness that is necessary for seeding the extractor in the above because it is not consumed, although it is needed to run the protocol.

Suppose that a protocol has some fixed expected score $\omega_{\mathrm{exp}}$. To demonstrate randomness expansion, that is, $\mathrm{rand}_{\mathrm{out}}-\mathrm{rand}_{\mathrm{in}}>0$, at this performance we have to choose the parameters $n$ and $\gamma$ appropriately. Increasing $n$ leads to an improvement in the rate, but takes longer and increases the experimental difficulty. The trade-off with $\gamma$ appears in the $\mathrm{rand}_{\mathrm{out}}$ and $\mathrm{rand}_{\mathrm{in}}$ terms. The input randomness evidently decreases as $\gamma$ shrinks, which is favourable because the input randomness is subtracted to calculate the randomness gained. However, the min-entropy also decreases because the error term scales roughly as $\frac{1}{\sqrt{\gamma}}$ (ref.~\citenum{DF}). Moreover, the statistical confidence decreases with less frequent testing, and as such the threshold score for successful parameter estimation must be lowered (that is, $\delta$ increased) to obtain a small completeness error (see Appendix for how to calculate this error). This also has a negative impact on the randomness produced.

\onecolumngrid
\appendix
\section{Theory of Device-Independent Quantum Randomness Expansion}

\subsection{Notation}
We summarize the main notation used in this paper. We use $\log$ for the logarithm base 2 and $\ln$ for the natural logarithm. The function $\operatorname{sign}(x) := \frac{x}{|x|}$ for $x \in \mathbb{R}$ is the sign function with $\operatorname{sign}(0) = 0$. We generally denote states by lower case Greek letters, such as $\rho$ and $\sigma$. These will often be subscripted by capital Latin letters denoting the Hilbert spaces upon which the state acts, such as a state $\rho_{ABE}$ denoting a state on Alice, Bob and Eve's joint system, although when clear from context the system subscripts may be omitted. We denote the set of states for a Hilbert space $A$ by $\S(A)$. We consider classical-quantum (CQ) states, which take the form
\begin{align*}
\rho_{AE}=\sum_{a\in\mathcal{A}}p(a)\ketbra{a}{a}\otimes \rho_E^{(a)}.
\end{align*}
The classical system, or register, $A$ contains the letters, $a$, of
some alphabet, $\mathcal{A}$. The quantum system $E$ carries states $\rho_E^{(a)}$ where the bracketed superscript indicates the value of the classical system upon which the state depends.

We account for multiple rounds by encoding the result of each round to
a different system. We use a subscript, e.g., $A_i$ to denote Alice's
system for round $i$. We denote quantum channels by calligraphic
letters, such as $\mathcal{M}_i$, which we also subscript with the
round-number to which the channel relates when appropriate. When a
channel is defined on some system $A$, we extend it to composite systems
implicitly, hence
$\mathcal{M}(\rho_{AR})=\mathcal{M}\otimes\mathcal{I}_R(\rho_{AR})$,
where $\mathcal{I}_R$ is the identity channel on the $R$ system. When
there are $n$ systems in total, we will sometimes use bold face, i.e.,
$\mathbf{A}=A_1\dots A_n$, and likewise for values on these $n$
systems, so
$\ketbra{\mathbf{a}}{\mathbf{a}}_{\mathbf{A}}=\ketbra{a_1}{a_1}_{A_1}\otimes \dots \otimes \ketbra{a_n}{a_n}_{A_n}$. We
also use capital letters to refer to the random variables on the
classical registers of CQ states.

We use slightly different notation from the main body to refer to the inputs and outputs for the EAT. In the main body, we have chosen notation which is consistent with the usual notation for cryptography, $A$ and $B$ denote outputs for each device and $X$ and $Y$ inputs. Here, however, we choose for our notation to be consistent with that of~\cite{DFR,DF} for readability since we adapt their proof. To translate from the appendix to the main text, we can make the substitutions $A\to AB$, $B\to XY$ and $U\to X$, i.e., $A$ is now the register containing the joint outputs, $B$ is the register containing the joint inputs and $X$ is the register containing the test scores.

\subsection{Protocol Assumptions}\label{SupSec:Protocol}
We recall the assumptions necessary for randomness expansion (see e.g.~\cite{ColbeckThesis}):

\begin{enumerate}
\item \label{Assum1} The user has a secure laboratory and can prevent any devices from receiving or sending communication at will.
\item \label{Assum2} The user possesses a trusted classical computer.
\item \label{Assum3} The user has some initial trusted randomness.
\item Quantum theory is correct and complete (see~\cite{CR_ext,CR_book1} for a connection between the two).
\end{enumerate}
The first assumption is necessary to prevent the devices from communicating directly to an adversary, which would trivially compromise security. We also use it to ensure that the devices do not learn whether a particular round is a test round or not (if they could learn this, they could behave honestly only on test rounds, which could violate the security).  Moreover, to genuinely violate a Bell inequality the devices must not be able to communicate their inputs to each other before producing their outputs, which can either be ensured by shielding or space-like separation.

The second assumption is required for correct data processing. In the extreme case, an untrusted computer could simply substitute the real key with a compromised one.  It may seem unreasonable to allow trust in classical computers but not in quantum devices, but classical computations are repeatable, so trust in them can be gained by repeating a calculation on different computers.

The third assumption is needed because the inputs to the devices must be random in order to reliably test that a Bell inequality is violated and randomness is also required to seed the extractor. The final assumption constrains the devices and the adversary to operate according to quantum mechanics, and is needed for the theoretical arguments to work.  The initial randomness is often thought of as a pre-existing seed. In our experiment for convenience RNGs were used to produce the inputs to the devices.  Although we have used RNGs in the protocol, the key point is that more randomness is generated than is used. At the same time we have demonstrated a Bell inequality violation on the test rounds while closing the detection and locality loopholes.

Because of Assumption 1, it is not strictly necessary to close the locality loophole in our experiment.  The important thing is to ensure that, before giving its output, each device only learns its own input and not that of the other device.  We can ensure this by shielding the devices.  (This is sufficient because the aim of the present experiment is not to disprove local realism.) Since we have chosen to implement a spot-checking protocol, even in the setup where Alice and Bob are far apart, we do not have full space-like separation over all rounds since the selection of a generation round is jointly communicated to the device locations by a central QRNG.  The shielding assumption is hence necessary to ensure that the devices do not alter their behaviour between generation rounds and test rounds. Furthermore, thinking of randomness expansion as a process that extends a pre-existing seed requires that this seed be shielded from the devices, a task that is not connected to their space-like separation. We also remark that it is undesirable to use space-like separated measurements when developing commercial devices implementing DIRNE, where compactness is important. Hence, our main experiment with shielding assumptions is more relevant for assessing the state-of-the-art performance such a commercial device could have.

\subsection{Extraction} \label{app:extraction}
Given a bound on the min-entropy, a seeded extractor is a deterministic function which takes as arguments the output string and a random seed and produces a string that is almost indistinguishable from uniform.
\begin{definition}[Strong extractor~\cite{NZ96,KR2011,DPVR}]
\label{def:ext} A function $\mathrm{Ext:\, }\lbrace 0,1\rbrace^n
\times \lbrace 0,1\rbrace^d\to \lbrace 0,1\rbrace^m$ is a
\textit{quantum-proof} $(k,\epext)$-strong extractor
with uniform seed, if for any classical-quantum state $\rho_{XE}$ with $H_{\min}(X|E)_{\rho}\geq k$ and for uniform seed $Y$ we have
\begin{align*}
\frac{1}{2}||\rho_{\mathrm{Ext}(X,Y)YE}-\tau_m\otimes \tau_d \otimes \rho_E||\leq \epext\,,
\end{align*}
where $\tau_d$ is the maximally mixed state of dimension $d$.
\end{definition}
A quantum-proof $(k,\epext)$-strong extractor thus
provides $m$ bits of randomness (except with probability
$\epext$) given a guarantee of $k$ bits of
min-entropy in $X$.
Extractors can also use the smooth min-entropy instead, which often
leads to more randomness (the smooth min-entropy can be much larger
than the min-entropy) in the output with a relatively small penalty
in the error term.
\begin{lemma}[~\cite{Renner,DPVR}]\label{lem:ext}
If $\mathrm{Ext:\, }\lbrace 0,1\rbrace^n  \times \lbrace 0,1\rbrace^d\to \lbrace 0,1\rbrace^m$ is a \textit{quantum-proof} $(k,\epext)$-strong extractor, then for any classical-quantum state $\rho_{XE}$ with $H_{\min}^{\epsmooth}(X|E)_{\rho_{XE}}\geq k$
\begin{align*}
\frac{1}{2}||\rho_{\mathrm{Ext}(X,Y)YE}-\tau_m\otimes \tau_d\otimes \rho_E||\leq \epext + 2\epsmooth\,.
\end{align*}
\end{lemma}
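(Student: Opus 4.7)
The plan is to combine the definition of smooth min-entropy with the strong extractor guarantee via a triangle inequality. First, invoke Definition~\ref{def:minent}: the hypothesis $H_{\min}^{\epsmooth}(X|E)_\rho \geq k$ yields a (possibly subnormalized) state $\tilde\rho_{XE} \in \mathcal{B}^{\epsmooth}(\rho_{XE})$ with $H_{\min}(X|E)_{\tilde\rho}\geq k$. Since purified distance upper bounds trace distance, $\tfrac{1}{2}\|\rho_{XE}-\tilde\rho_{XE}\|_1 \leq \epsmooth$. Tensoring with the uniform seed $Y$ preserves this, so $\tfrac{1}{2}\|\rho_{XYE}-\tilde\rho_{XYE}\|_1 \leq \epsmooth$.

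Second, apply Definition~\ref{def:ext} to $\tilde\rho_{XE}$ (with the same uniform seed) to obtain
\begin{align*}
\tfrac{1}{2}\|\tilde\rho_{\mathrm{Ext}(X,Y)YE}-\tau_m\otimes\tau_d\otimes\tilde\rho_E\|_1 \leq \epext\,.
\end{align*}
A minor care point: the extractor definition is stated for normalized CQ states, but it extends to subnormalized ones at the cost of at most the trace defect, which is itself bounded by $\epsmooth$; alternatively one may renormalize $\tilde\rho$ at the same cost. This is the only nontrivial bookkeeping step.

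Third, bound the two ``transport'' terms by the data processing inequality for trace distance. The map $(x,y,E)\mapsto(\mathrm{Ext}(x,y),y,E)$ is a CPTP channel, hence
\begin{align*}
\tfrac{1}{2}\|\rho_{\mathrm{Ext}(X,Y)YE}-\tilde\rho_{\mathrm{Ext}(X,Y)YE}\|_1 \leq \tfrac{1}{2}\|\rho_{XYE}-\tilde\rho_{XYE}\|_1 \leq \epsmooth\,.
\end{align*}
Likewise, taking partial trace over $XY$ and tensoring with the fixed state $\tau_m\otimes\tau_d$ gives $\tfrac{1}{2}\|\tau_m\otimes\tau_d\otimes\rho_E - \tau_m\otimes\tau_d\otimes\tilde\rho_E\|_1 \leq \epsmooth$.

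Finally, assemble via the triangle inequality:
\begin{align*}
\tfrac{1}{2}\|\rho_{\mathrm{Ext}(X,Y)YE}-\tau_m\otimes\tau_d\otimes\rho_E\|_1
&\leq \tfrac{1}{2}\|\rho_{\mathrm{Ext}(X,Y)YE}-\tilde\rho_{\mathrm{Ext}(X,Y)YE}\|_1 \\
&\quad + \tfrac{1}{2}\|\tilde\rho_{\mathrm{Ext}(X,Y)YE}-\tau_m\otimes\tau_d\otimes\tilde\rho_E\|_1 \\
&\quad + \tfrac{1}{2}\|\tau_m\otimes\tau_d\otimes\tilde\rho_E-\tau_m\otimes\tau_d\otimes\rho_E\|_1 \\
&\leq \epsmooth + \epext + \epsmooth = \epext + 2\epsmooth\,.
\end{align*}
The argument is essentially routine once the right nearby state is fixed; the only real obstacle, as noted, is making the extractor guarantee applicable to the subnormalized smoothing state, which is handled by either a renormalization argument or a direct extension of Definition~\ref{def:ext} to subnormalized inputs.
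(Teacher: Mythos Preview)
Your argument is correct and is precisely the standard proof of this folklore lemma: pick the smoothing state, apply the extractor guarantee there, and transport back with two applications of the data-processing inequality plus the triangle inequality. The paper itself does not supply a proof of this lemma---it is stated with citations to Renner and to De, Portmann, Vidick, Renner---so there is nothing to compare against; your write-up matches the argument one finds in those references.
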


The crucial feature of Definition~\ref{def:ext} and Lemma~\ref{lem:ext} is that they involve the state $\rho_{\mathrm{Ext}(X,Y)YE}$ as opposed to $\rho_{\mathrm{Ext}(X,Y)E}$.  This is what makes the extractor a \emph{strong} extractor: the seed randomness $Y$ remains random and is uncorrelated with the extractor output and any information held by an adversary (to within distance $\epext+2\epsmooth$). This is why the seed is not considered to be consumed in the process.

In this work we use Toeplitz matrices for the extraction, a procedure that was also followed in~\cite{Liu2} (see also~\cite{Ma13}). A Toeplitz matrix $T$ is one for which $T_{i,j}=T_{i+1,j+1}$ whenever both elements exist.  Thus, to choose a random Toeplitz matrix we can randomly choose the first row and column, from which all other entries are fixed.  The set of all binary Toeplitz matrices form a set of two-universal hash functions~\cite{MNT,Krawczyk}, and hence can be used as quantum-proof $(k,\epsilon_{\mathrm{EXT}})$-strong randomness extractors with $\epsilon_{\mathrm{EXT}}=2^{-(k-m)/2}$~\cite{Renner}. Given a raw string of length $n$ with min-entropy $k$, we can use a randomly chosen $m\times n$ binary Toeplitz matrix $T$ to extract the randomness by multiplication with the raw string modulo 2.  Given that $\epsilon_{\mathrm{EXT}}$ drops exponentially in $k-m$ one can ensure a small error without sacrificing much in the size of the output.

This theoretically simple computation runs into difficulty due to the size of the data considered.  To reduce the computational requirements we perform several simplifications that reduce the problem size and increase its efficiency. These simplifications give the largest advantage for the experiment with space-like separation, and we give the figures used for this case in the remainder of this section.

Firstly, in the protocol we replace Bob's output, $B_i$, with $0$ in the generation rounds. This means that none of the entropy accumulated during the protocol is contained within the outputs recorded by Bob during the generation rounds and so we do not need to include these in our extraction. The extraction from ${\bf AB}$ can be performed instead as an extraction from the binary string $v=\mathbf{AB}_{\text{test}}$, where $\mathbf{B}_{\text{test}}$ is the string of outputs recorded by Bob during test rounds. By removing the outputs of Bob in the generation rounds the input to the extractor is roughly half of its original size, making the computation easier.

After this initial simplification we are left with the problem of multiplying an $m\times n = (6.496\times10^9) \times (3.17\times10^{12}$) Toeplitz matrix $T$ to the raw data vector $v$ with a length $n=3.17\times10^{12}$ yielding our extracted randomness ${\bf M}=Tv$ of approximately $6.496\times10^{9}$ bits. To reduce the memory requirements we split this matrix-vector multiplication into several smaller matrix-vector multiplications. For some $l \leq m$ we can split $T$ into $\ceil{n/l}$ blocks of size $m\times l$, each comprising $l$ columns, i.e., $T = (T_{1}\,T_{2}\,\dots T_{\ceil{n/l}})$ where $T_1$ is the matrix consisting of the first $l$ columns of $T$, $T_2$ is the matrix consisting of the second $l$ columns and so on.\footnote{The final block will have fewer columns if $n/l$ is not an integer.} Splitting $v = (v_1\,v_2\,\dots\,v_{\ceil{n/l}})$ also into blocks of the same length $l$ we can rewrite the original matrix vector multiplication as
\begin{equation}
Tv =  T_1 v_1 \oplus T_2 v_2 \oplus \dots \oplus T_{\ceil{n/l}} v_{\ceil{n/l}},
\end{equation}
where $\oplus$ denotes elementwise addition modulo $2$. In our case we split the problem into $1618$ blocks.

Evaluating $T_iv_i$ requires an $m \times l$ matrix-vector multiplication where $m = 6.496\times10^9$ and $l = 1.959 \times 10^9$. By exploiting the structure of Toeplitz matrices we can use fast Fourier transforms (FFTs) to reduce the time complexity of this operation from $O(m^2)$ to $O(m\log m)$~\cite{gohberg1994complexity}. For completeness we now detail the FFT based algorithm.

An $m\times n$ Toeplitz matrix takes the form
\begin{equation}\label{}
T =
\left(
\begin{array}{cccccc}
a_0     & a_{-1}  & a_{-2}& \cdots & a_{-(n-2)}   & a_{-(n-1)} \\
a_1     & a_0     & a_{-1}&\ddots &              & a_{-(n-2)} \\
a_2     & a_1     & a_0&\ddots & \ddots       & \vdots     \\
\vdots  & \vdots  &\vdots& \ddots       & \ddots       & a_{-(n-1)+(m-2)} \\
a_{m-1} & a_{m-2} & a_{m-3}&\cdots & a_{-n+(m-1)} & a_{-(n-1)+(m-1)} \\
\end{array}
\right)\,,
\end{equation}
and so is uniquely specified by the vector $a = (a_{-(n-1)}, a_{-(n-2)}, \dots, a_{m-1})$. Consider the vector $b = (v\,\, 0_m)$ which is the vector $v$ appended with $m$ zeros. Then, the matrix-vector product $Tv$ may be computed using the identity
\begin{equation}\label{eq:FFT}
(c_{n-1}\,\,Tv) = \mathcal{F}^{-1}\left( \mathcal{F}(a) \circledcirc \mathcal{F}(b) \right),
\end{equation}
where $c_{n-1}$ is some $n-1$ dimensional vector, $\mathcal{F}$ denotes the discrete Fourier transform, $\mathcal{F}^{-1}$ denotes its inverse and $\circledcirc$ denotes the elementwise (Hadamard) product of vectors. Equation \eqref{eq:FFT} can be derived by noting that the circular convolution of the two vectors $a$ and $b$ gives $a \star b = (c_{n-1}\,\, Tv)$ and furthermore, the Fourier convolution theorem states that $a \star b = \mathcal{F}^{-1}\left( \mathcal{F}(a) \circledcirc \mathcal{F}(b) \right)$. Finally, noting that each of the blocked matrices $T_i$ is itself a Toeplitz matrix we can perform the FFT technique to speed up each of the $T_iv_i$ computations individually.

To implement the extraction procedure we utilized the {\sc Viking} research cluster and the FFTs were implemented using the {\sc Fftw3} package~\cite{FrigoJo98}. The total computation time was around 249 hours, split across 32 cores and required around 400GB of memory.\footnote{In principle we could have utilized more memory and reduced the number of blocks. However, significant slowdowns occur to the FFT algorithm when the vectors are not stored in a contiguous block of memory.} In summary, by applying a quantum-proof strong extractor (Toeplitz hashing) to the output bit-string ($3.17 \times 10^{12}$ bits) we obtained a shorter bit-string $(6.496 \times 10^{9}$ bits) which is almost indistinguishable from uniform randomness. Finally, the seed of the extractor $a$, which in our implementation has size $m+n-1$ bits, is not expended by the protocol (this is a condition for a \emph{strong} extractor), and can thus be reused for some other purpose. Alternatively, a public (but trusted) source of randomness can be used without compromising security (provided it is unknown to the devices before the protocol is run).

\begin{remark}
We chose to use Toeplitz hashing for the extraction because it was efficient enough to implement on our large output data. In principle, other extractors that require shorter seeds would be desirable to use instead, such as Trevisan's extractor~\cite{trevisan,DPVR}.
\end{remark}

\subsection{Error parameters}\label{app:Error}
Various error parameters feature in the security statements of device-independent randomness expansion. In this section we reprise the discussion of some of the parameters introduced in the main body of the text, starting with a calculation of the completeness error. We use the following theorem, which gives tight bounds for the cumulative distribution function (CDF) of the binomial distribution.

\begin{theorem}[\cite{AD,SZ}]\label{thm:binomial_CDF}
Let $n \in \mathbb{N}$, $p \in (0,1)$ and let $X$ be a random variable distributed according to $X\sim\mathrm{Binomial}(n,p)$. Then, for every
$k = 0,\dots,n-1$ we have
\begin{align*}
C(n,p,k)\leq\mathbb{P}[X\leq k]\leq C(n,p,k+1)\,,
\end{align*}
where
\begin{align*}
  G\left(x,p\right)&=x \ln\frac{x}{p}+(1-x)\ln \frac{1-x}{1-p}\\
  \Phi(x)&=\frac{1}{\sqrt{2 \pi}}\int_{-\infty}^x\mathrm{e}^{-\frac{u^2}{2}}\mathrm{d}u\\
  C(n,p,k)&=\Phi\left(\mathrm{sign}\left(\frac{k}{n}-p\right)\sqrt{2 n G\left(\frac{k}{n},p\right)}\right).
\end{align*}
\end{theorem}

We call an implementation of our DIRNE protocol \emph{honest} if on each round, the state shared between Alice and Bob and the measurements performed for their respective inputs remain the same. In particular, this implies that the CHSH score for each test round is distributed in an i.i.d.\ manner. We can use Theorem~\ref{thm:binomial_CDF} to upper-bound the completeness error of an honest implementation of the protocol.

\begin{corollary}[Completeness error]
Let $\wexp$ be the expected CHSH score achieved by some honest implementation of Protocol DIRNE. Then, the probability that the protocol aborts is no larger than
\begin{equation}\label{eq:completeness_error}
1-C(n,\gamma(1-\wexp),\floor{n\gamma(1-(\wexp-\delta))}),
\end{equation}
where $\delta > 0$ is the selected confidence threshold. In particular, this means the completeness error for the protocol, $\epcomplete$, is no larger than~\eqref{eq:completeness_error}.
\end{corollary}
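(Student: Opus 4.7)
The plan is to reduce the completeness claim to a direct application of Theorem~\ref{thm:binomial_CDF}, using the fact that, under an honest implementation, the random variable being thresholded is a sum of $n$ i.i.d.\ Bernoulli random variables. By the definition of ``honest'' given just before the corollary statement, the shared state and the local measurements are the same on every round; combined with the fact that the round-type bit $T_i$ and, when $T_i=1$, the setting bits $X_i,Y_i$ are drawn freshly and independently from the RNGs specified in Box~1, the per-round indicator $U_i=1$ is both identically distributed across rounds and independent across rounds.

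First I would compute the per-round probability $\Pr(U_i=1)$. The event $U_i=1$ occurs precisely when round $i$ is a test round (probability $\gamma$, by Step~2 of the protocol) and the CHSH game is won on that round (conditional probability $\wexp$ by hypothesis), hence $\Pr(U_i=1)=\gamma\wexp$. Consequently, the total score $W:=|\{i:U_i=1\}|$ is distributed as $\mathrm{Binomial}(n,\gamma\wexp)$. Next I would rewrite the abort event in a form compatible with Theorem~\ref{thm:binomial_CDF}: the protocol aborts exactly when $W<n\gamma(\wexp-\delta)$, and since $W$ is integer-valued this event is contained in $\{W\le \ceil{n\gamma(\wexp-\delta)}\}$ regardless of whether $n\gamma(\wexp-\delta)$ happens to be an integer. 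Setting $k:=\ceil{n\gamma(\wexp-\delta)}$ and applying Theorem~\ref{thm:binomial_CDF} with $p=\gamma\wexp$ then gives
\begin{equation*}
\Pr\!\left(W\le k\right)\le C(n,k+1,\gamma\wexp)=C\!\left(n,\ceil{n\gamma(\wexp-\delta)}+1,\gamma\wexp\right),
\end{equation*}
which is the stated bound on $\epcomplete$.

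There are no serious obstacles; the nontrivial analytic work is already carried out inside Theorem~\ref{thm:binomial_CDF}. The two points I would verify explicitly are (i) that the i.i.d. structure really does hold in an honest run, which relies on the assumption that the devices do not learn $T_i$ before producing their outputs so that the per-round CHSH winning probability is genuinely $\wexp$ in both test and (hypothetical) generation rounds, and (ii) that the index $k=\ceil{n\gamma(\wexp-\delta)}$ lies in the range $\{0,\ldots,n-1\}$ required by Theorem~\ref{thm:binomial_CDF}, which is automatic in any meaningful expansion regime since $\gamma(\wexp-\delta)<1$. Everything else is bookkeeping.
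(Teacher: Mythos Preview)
Your proposal is correct and follows essentially the same route as the paper: identify $\Pr(U_i=1)=\gamma\wexp$ so that the win count is $\mathrm{Binomial}(n,\gamma\wexp)$, pass from the strict threshold to $W\le\lceil n\gamma(\wexp-\delta)\rceil$, and apply Theorem~\ref{thm:binomial_CDF} with $k=\lceil n\gamma(\wexp-\delta)\rceil$. Your extra remarks about verifying the i.i.d.\ structure and the admissible range of $k$ are sensible sanity checks but not additional ideas beyond what the paper does.
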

\begin{proof}
 For an honest implementation of the protocol the score registers $U_i$ are distributed according to
\begin{equation*}
\mathbb{P}[U_i=u] = \begin{cases}
\gamma (1-\wexp) & \text{for }u=0 \\
\gamma \wexp & \text{for }u=1, \\
(1-\gamma) &\text{for }u=\perp
\end{cases}.
\end{equation*}
for each $i=1,\dots,n$. Recall that the protocol aborts when $|\{U_i:U_i=0\}|>n\gamma(1-(\wexp-\delta))$. The quantity $|\{U_i:U_i=0\}|$ is a random variable following the binomial distribution $\mathrm{Binomial}(n,\gamma(1-\wexp))$. Applying Theorem~\ref{thm:binomial_CDF}, we find the probability that the protocol aborts is upper bounded by
\begin{align*}
\mathbb{P}\bigl[|\{U_i:U_i=0\}|>n\gamma(1-(\wexp-\delta))\bigr]&\leq \mathbb{P}\bigl[|\{U_i:U_i=0\}|>\floor{n\gamma(1-(\wexp-\delta))}\bigr] \\
&=1-\mathbb{P}\bigl[|\{U_i:U_i=0\}|\leq \floor{n\gamma(1-(\wexp-\delta))}\bigr]\\                                                             &\leq 1-C(n,\gamma(1-\wexp),\floor{n\gamma(1-(\wexp-\delta))}).\qedhere
\end{align*}
\end{proof}

The second parameter in the security definitions of randomness expansion is the \emph{soundness error} $\epsound$. Following~\cite{BRC}, we may bound the soundness error of our randomness expansion protocol as
\begin{equation}
\epsound \leq \max \{\epeat, \epext + 2 \epsmooth\}.
\end{equation}
The first term $\epeat$, which we refer to as the \emph{device-independent error}, is roughly our tolerance of encountering `lucky' adversaries. In Theorem~\ref{Thm:EAT} we see that the error terms depend explicitly on the probability, $p_{\Omega}$, that the protocol does not abort. Since we work in the device-independent setting we cannot assume to know the value of $p_{\Omega}$ (this can be set by the adversary). Instead, we can replace $p_{\Omega}$ with $\epeat$ in the error terms and consider the two possible scenarios. If $p_{\Omega} \geq \epeat$ then by making the replacement $p_{\Omega} \mapsto \epeat$ the error terms only increase and we have genuine lower bound on the accumulated entropy. Otherwise, we have $p_{\Omega} \leq \epeat$. However, the chances that the protocol passes but the entropy bound is not valid is less than $\epeat$ which can be chosen to be negligibly small. In summary, either the protocol aborts with probability greater than $1-\epeat$ or Theorem~\ref{Thm:EAT} with $p_{\Omega} \mapsto \epeat$ gives a valid lower bound on the entropy we produce.

The second term comes from the result of applying a strong extractor to the raw output of our experiment (cf.\ Lemma~\ref{lem:ext}). It consists of the smoothing parameter $\epsmooth$ and the extractor error $\epext$. The extractor error has far less impact on our results than the smoothing error
and for our calculations we choose to set $\epext=10^{-5}\,\epsmooth$. The smoothing parameter, $\epsmooth$, plays a more prominent role in both the soundness error and the lower bound on quantity of certifiable smooth min-entropy (cf.\ Theorem~\ref{Thm:EAT}). By reducing the smoothing parameter we may decrease the soundness error at the expense of smaller bound on the entropy certified by the entropy accumulation theorem. This relationship is the same for the device-independent error. To simplify our calculations we set $\epeat=\epext+2\epsmooth$.

\subsection{Entropies}
In the entropy accumulation theorem, a lower bound on the min-entropy
is obtained by way of the $\alpha$-entropy. In this work, we take
$\alpha\in(1,2]$, although some definitions and identities hold for
wider ranges of $\alpha$. We begin with two definitions
of the $\alpha$-entropy.

\begin{definition}[R\'enyi $\alpha$-entropy~\cite{MDSFT}]
We define the $\alpha$-entropy by means of the sandwiched R\'enyi divergence,
\begin{align*}
D_\alpha(\rho||\sigma)=\frac{1}{\alpha-1}\log \tr\left[\left(\sigma^{\frac{1-\alpha}{2\alpha}} \rho \sigma^{\frac{1-\alpha}{2\alpha}}\right)^\alpha\right],
\end{align*}
where $\rho$ and $\sigma$ are positive semidefinite operators on the same Hilbert space. Then, on a bipartite state $\rho_{AB}$,
\begin{align*}
H^\uparrow_\alpha(A|B)_{\rho}&=\sup_{\sigma_B} -D_\alpha(\rho_{AB}||\mathbb{1}_A\otimes \sigma_B)\\
H_\alpha(A|B)_{\rho}&= -D_\alpha(\rho_{AB}||\mathbb{1}_A\otimes \rho_B)\,.
\end{align*}
When it is clear from the context we will omit the subscript $\rho$.
\end{definition}
The smooth min-entropy can be then be lower bounded by~\cite{TCR,DFR,Tomamichel_book}
\begin{align}
\label{eq:minvalpha}
H^{\epsmooth}_{\min}(A|B)_\rho \geq H^\uparrow_\alpha(A|B)_\rho - \frac{1}{\alpha-1}\log\frac{1}{1-\sqrt{1-\epsmooth^2}}.
\end{align}
Additionally, we need to condition upon observing a pass-event $\Omega$ on a classical register. The state can be written as $\rho=p_{\Omega} \rho_{\Omega} + (1-p_{\Omega}) \rho_{\Omega^\perp}$. The entropy of the conditioned state and unconditioned state can be related by
\begin{align}
\label{eq:Conditioning}
H^\uparrow_\alpha(A|B)_{\rho_\Omega}\geq H^\uparrow_\alpha(A|B)_\rho-\frac{\alpha}{\alpha-1}\log \frac{1}{p_{\Omega}}.
\end{align}
This result is proven in Lemma~B.5 of~\cite{DFR}. We can combine  Equation~\ref{eq:minvalpha} and Equation~\ref{eq:Conditioning} to obtain
\begin{align*}
H^{\epsmooth}_{\min}(A|B)_\rho\geq H^\uparrow_\alpha(A|B)_\rho - \frac{\alpha}{\alpha-1}\log\left(\frac{1}{p_{\Omega}(1-\sqrt{1-\epsmooth^2})}\right),
\end{align*}
where we have used that $\alpha>1$.

\subsection{Entropy Accumulation}
In this section we outline further improvements to the theory of~\cite{DF,DFR} which produce improved rates and make randomness expansion experimentally accessible\footnote{The quantum probability estimation framework~\cite{QPE} provides an alternative way to prove bounds on the amount of extractable randomness, but we do not use this in this work.}. We consider a set of channels, $\lbrace \mathcal{M}_i\rbrace_i$, for $i\in\lbrace 1,\dots,n\rbrace$ with $\mathcal{M}_i: \S(R_{i-1})\to \S(\A_i \B_i \X_i R_i)$, sometimes dubbed \textit{EAT channels}\footnote{The definition of EAT channels supplied here is slightly less general than that in~\cite{DFR} but the proof is easily adapted to the more general definition.}.

\begin{definition}[EAT channels]\label{def:EAT-channels}
Let $\{\M_i\}_{i}$ be a collection of trace preserving completely positive maps with $\M_i: \S(R_{i-1}) \to \S(A_iB_iX_iR_i)$ for each $i=1,\dots,n$. Let $\rho_{\mathbf{ABX}R_nE}=(\M_n\circ\dots\circ\M_1)\rho_{R_0E}$ be a state obtained by sequential application of the channels to an initial state $\rho_{R_0E}$. Then the collection $\{\M_i\}_i$ is called a set of EAT channels if for each $i=1,\dots,n$ the following both hold:
\begin{enumerate}
\item The systems $A_i$, $B_i$ and $X_i$ are all finite dimensional and classical and the state of the register $X_i$ is a deterministic function of $A_i$ and $B_i$. The system $R_i$ may be arbitrary.
\item For any initial state $\rho_{R_0E} \in \S(R_0E)$, the final state $\rho_{\mathbf{A}\mathbf{B}E}=\tr_{\mathbf{X}R_n}\left[\rho_{\mathbf{ABX}R_nE}\right]$ satisfies the conditional independence constraints $I(A_1^{i-1}:B_i|B_1^{i-1}E)=0$.
\end{enumerate}
[$I(A:B|C)$ is the conditional mutual information.]
\end{definition}

The collection of channels in the above definition represent the sequential interaction with the devices that occurs during the first four steps of the protocol from the main-text. In particular, as the inputs to our devices are chosen independently, the second condition trivially holds for any collection of channels we could use to implement the protocol. Similarly, the nature of the systems imposed by the first condition is also satisfied by any channel implementing our protocol, $A_i$ are the outputs for round $i$, $B_i$ are the inputs for round $i$ and $X_i$ is the recorded score for round $i$. We may also view EAT channels as \emph{quantum instruments}, i.e., for the channel $\M_i$ there is a collection of trace non-increasing completely positive maps $\{\M_i^{ab}\}_{ab}$ with $\M_i^{ab}:\S(R_{i-1}) \to \S(R_i)$ such that for a state $\rho \in \S(R_{i-1})$ we have
\begin{equation}\label{eq:EAT-instruments}
\M_i(\rho) = \sum_{a,b} \ketbra{a}{a}_{A_i} \otimes \ketbra{b}{b}_{B_i} \otimes\ketbra{x(a,b)}{x(a,b)}_{X_i} \otimes\M_i^{ab}(\rho),
\end{equation}
where $x(a,b)$ is a deterministic function of the inputs and outputs and $\tr[\M_i^{ab}(\rho)] = p(a,b)$.

\begin{definition}[Frequency distribution]
Let $\rho_{\mathbf{X} R}$ be a CQ state. Then we define
\begin{align*}
\mathrm{freq}_{\mathbf{X}}(x)=\frac{|\lbrace i\in\lbrace 1,\dots,n\rbrace: \X_i=x\rbrace|}{n}\,,
\end{align*}
and use $\freq_{\mathbf{X}}$ to refer to the induced probability distribution.
\end{definition}

Let $\P$ denote the set of all probability distributions over the possible outputs of the score register $X$. Given a set of channels $\mathfrak{S}$ whose outputs include a register $X$, we define $\Q_{\mathfrak{S}}$ to be the set of probability distributions over the possible outputs of the score register $X$ that could arise from the application of a channel from the set $\mathfrak{S}$ to some state. [Note that no other properties are assumed about $\mathfrak{S}$, so different channels in the set could have different input spaces, for instance.]
\begin{definition}[Achievable distributions]
For a set of channels, $\mathfrak{S}$ whose outputs include a register $X$, the set of achievable score distributions is defined by
\begin{equation}\label{eq:Q}
\Q_{\mathfrak{S}} = \left\{p_X: \exists\, \M\in\mathfrak{S},\rho \text{ such that } \M(\rho)_X = \sum_x p_{X}(x) \ketbra{x}{x}\right\},
\end{equation}
where $\M(\rho)_X$ denotes the state after applying $\M$ and tracing out all systems apart from the score system $X$.
\end{definition}

Given a distribution $q \in \Q_{\mathfrak{S}}$, we can identify the set $\Gamma_\mathfrak{S}(q)$ of states and channels that can achieve $q$, i.e., 
\begin{align}\label{eq:Gammaq}
\Gamma_\mathfrak{S}(q)=\left\{ (\omega_{RE},\M): \M(\omega_{RE})_{X}=\sum_{x} q(x)\ketbra{x}{x}\right\},
\end{align}
where the channel $\M\in\mathfrak{S}$ acts only on the system $R$ and the system $E$ is arbitrary (it can represent a quantum system held by an adversary).

We now define round-by-round lower bounds on the von Neumann entropy,
$f(q)$, known as min-tradeoff functions.
\begin{definition}[Min-tradeoff functions]
\label{def:MinTrade}
Given a set of channels $\mathfrak{S}$, whose outputs contain registers $A$, $B$ and $X$, a function from $\Q_\mathfrak{S}$ to $\mathbb{R}$ is called a \emph{rate function for $\mathfrak{S}$} if for all $q\in\Q_{\mathfrak{S}}$ we have
\begin{align}\label{eq:rate_function_def}
\rate(q)\leq \inf_{(\omega_{RE},\M)\in\Gamma_\mathfrak{S}(q)} H(\A|\B E)_{\M(\omega)}\,,
\end{align}
where the channel $\M$ acts on the $R$ system of $\omega$ (note that different channels in $\mathfrak{S}$ can have input spaces of different sizes).
When clear from the context we will omit mention of the set of
channels and just call such a function a rate function.  We refer to
affine rate functions as \emph{min-tradeoff functions}, and for these we
reserve the symbol $f$.
\end{definition}

\begin{remark}
This definition differs slightly from the definitions of min-tradeoff functions in~\cite{DFR,DF}. In these works, the min-tradeoff function is defined for an individual channel, and in subsequent proofs and theorem statements, it is specified that the function under consideration is a min-tradeoff function for each channel used in a protocol. We define our min-tradeoff functions with respect to a set of channels from the start, as it simplifies the writing of some of our proofs.


\end{remark}

Rate and min-tradeoff functions are lower bounds to the worst-case von Neumann entropy for any \textit{individual} round, given in terms of the score. Whilst rate functions are only defined on the set $\Q_\mathfrak{S}$ (the infimum is trivial for distributions not in $\Q_\mathfrak{S}$), we allow the domain of the min-tradeoff functions (affine rate functions) to be naturally extended to all probability distributions on $X$, i.e., to $\P$.

In order to use our min-tradeoff functions with the entropy accumulation theorem we require knowledge of several of their properties. Namely, if $f$ is a min-tradeoff function then we define:
\begin{itemize}
\item Maximum over all probability distributions:
\begin{equation}
\Max(f) = \max_{p \in \P} f(p).
\end{equation}
\item Minimum over distributions achievable by sending quantum states through channels in $\mathfrak{S}$:
\begin{equation}
\Min_{\Q_\mathfrak{S}}(f) = \inf_{p \in \Q_\mathfrak{S}} f(p).
\end{equation}
\item Variance
\begin{equation}
\varq(f) = \sum_x p(x) (f(\delta_x)-\mathbb{E}[f(\delta_x)])^2\,.
\end{equation}
\end{itemize}
In the final definition, $\delta_x$ denotes the probability distribution with $p(x)=1$. As $f$ is an affine function, the final definition is the statistical variance of the function $g(x)=f(\delta_x)$. In~\cite{DF} $\var(f)$ is defined to be $\var(f)=\sup_{p\in\mathcal{Q_\mathfrak{S}}} \varq(f)$, which gives a worst-case variance over all channels in $\mathfrak{S}$.

In order to explain our modification to the entropy accumulation theorem we reproduce a preliminary step in the proof, beginning Proposition V.3 in~\cite{DF}. Let $f$ be a min-tradeoff function for a set of EAT channels $\mathfrak{S}=\{\M_i\}_i$. Let $\rho_{\textbf{ABX}E}$ be the state generated by a sequential application of these channels as in Definition~\ref{def:EAT-channels}. We additionally apply for each round a channel $\mathcal{D}_i:\X_i\to \X_i \Csys_i$, which encodes the min-tradeoff function directly. In particular, we define
\begin{align*}
\mathcal{D}_i(\ketbra{x}{x}_{\X_i})=\ketbra{x}{x}_{\X_i}\otimes \tau(x)_{\Csys_i}\,,
\end{align*}
where $\tau(x)_{\Csys_i}$ is defined so that $H_\alpha(\tau(x)_{\Csys_i})=\Max(f)-f(\delta_x)$ for some fixed $\alpha$. We then apply the channels to the state to get $\rho_{\mathbf{ABXD} E}=\mathcal{D}_n\circ\dots\circ \mathcal{D}_1(\rho_{\mathbf{A B X} E})$. In a real protocol, the acceptance of a state will be conditioned upon some success event $\Omega$, determined by the values observed on the $\mathbf{X}$ registers (such as a minimal threshold for the CHSH score). Let $\rho_{\mathbf{ABX} E|\Omega}$ be the state conditioned on passing. It can then be shown that
\begin{align*}
H_\alpha^\uparrow(\textbf{A}|\textbf{B} E)_{\rho_{|\Omega}}\geq H_\alpha(\textbf{A} \mathbf{D}|\textbf{B} E)_{\rho_{|\Omega}}-n \mathrm{Max}(f)+n h,
\end{align*}
where $h$ is defined to be $\inf_{p\in\Omega} f(p)$ (see Equation~24 in~\cite{DF}). Then, using Equation~\ref{eq:Conditioning}, we obtain
\begin{align*}
H_\alpha^\uparrow(\textbf{A}|\textbf{B} E)_{\rho_{|\Omega}}\geq H_\alpha(\textbf{A} \mathbf{D}|\textbf{B} E)_{\rho}-\frac{\alpha}{\alpha-1}\log\frac{1}{p_{\Omega}}-n \mathrm{Max}(f)+n h,
\end{align*}
which relates the bound for the state conditioned on $\Omega$ to the
unconditioned state.

We now seek a lower bound to $H_\alpha(\textbf{A} \mathbf{D}|\textbf{B} E)$ in terms of the von Neumann entropy. The first step involves using the chain rules of~\cite{DF,DFR} to obtain
\begin{align*}
H_\alpha(\textbf{A} \mathbf{D}|\textbf{B} E)_{\rho}\geq \sum_i \inf_{\omega_{R_{i-1}R}} H_\alpha(\A_i \Csys_i|\B_i R)_{\mathcal{D}_i\circ\M_i(\omega)}.
\end{align*}
From this, we apply a continuity bound relating the von Neumann entropy to the $\alpha$-entropies which yields
\begin{align}
\label{eq:Error}
H_\alpha(\textbf{A} \mathbf{D}|\textbf{B} E)_{\rho}\geq \sum_i \inf_{\omega_{R_{i-1}R}}&( H(\A_i|\B_i R)_{\mathcal{D}_i\circ\M_i(\omega)}+H(\Csys_i|\X_i)_{\mathcal{D}_i\circ\M_i(\omega)}+\nonumber \\&-(\alpha-1)V(\A_i \Csys_i|\B_i R)_{\mathcal{D}_i\circ\M_i(\omega)}-(\alpha-1)^2 K_\alpha(\A_i \Csys_i|\B_i R)_{\mathcal{D}_i\circ\M_i(\omega)})\,,
\end{align}
where for CQ states\footnote{Note that the form of the error terms given in~\cite{DF} differs slightly in the dimension dependence. Since $A$ is  classical, we use the relevant statement in~\cite[Corollary III.5]{DF} to give our expression for $V$ and the statement in the proof of~\cite[Proposition~V.3]{DF} to give our expression for $K$, whereas in~\cite{DF} the statement given allows $A$ to be quantum.}
\begin{align}
V(\A_i \Csys_i|\B_i R)_{\mathcal{D}_i\circ\M_i(\omega)}&\leq \frac{\ln 2}{2} \left(\log(1+2 d_A) +\sqrt{2+\varq(f)}\right)^2\equiv V(f,p),\label{eq:V}\\
K_\alpha(\A_i \Csys_i|\B_i R)_{\mathcal{D}_i\circ\M_i(\omega)}&\leq \frac{1}{6(2-\alpha)^3 \ln 2}2^{(\alpha-1)(\log d_A +\mathrm{Max}(f)-\Min_{\mathcal{Q}}(f))}\ln^3 \left(2^{\log d_A +\mathrm{Max}(f)-\Min_{\mathcal{Q}}(f)}+\mathrm{e}^2\right)\nonumber\\
&\equiv K_\alpha(f)\,.\label{eq:K}
\end{align}
So far we have not deviated from the argument of~\cite{DF},
in which they then employ the following simplifications
\begin{align*}
V(f,q)\leq V(f) \equiv \frac{\ln 2}{2} \left(\log(1+2d_A) +\sqrt{2+\var(f)}\right)^2\,,
\end{align*}
where $\var(f)=\sup_{p\in \mathcal{Q}_{\mathfrak{S}}} \varq(f)$, and
\begin{align*}
H(\A_i|\B_i R)_{\mathcal{D}_i\circ\M_i(\omega)}+H(\Csys_i|\X_i)_{\mathcal{D}_i\circ\M_i(\omega)}&= H(\A_i|\B_i R)_{\mathcal{D}_i\circ\M_i(\omega)}+\Max(f)-f(p)\\
&\geq \Max(f)\,,
\end{align*}
which is seen by noting that
$H(\A_i|\B_i R)_{\mathcal{D}_i\circ\M_i(\omega)}=H(\A_i|\B_i
R)_{\M_i(\omega)}\geq f(p)$ by the definition of the channel
$\mathcal{D}$ and the min-tradeoff function (where $p$ is assumed to
be the distribution on the $\X_i$ register).  These simplifications
render the error term `statistics agnostic', in the sense that
Equation~\ref{eq:Error} becomes
\begin{align*}
H_\alpha(\textbf{A} \mathbf{D}|\textbf{B} E)_{\rho}&\geq \sum_i(\Max(f)-(\alpha-1)V(f)-(\alpha-1)^2 K_\alpha(f))\\
&= n(\Max(f)-(\alpha-1)V(f)-(\alpha-1)^2 K_\alpha(f))\,,
\end{align*}
which is independent of the actual statistics observed and this incurs a
significant loss of entropy in the regimes of interest to us. Instead, by not
taking the supremum in $V(f,q)$ and by permitting a tighter bound to
$H(\A_i|\B_i R)_{\M_i(\omega)}$, we find an improvement.

Given a rate function, we can instead use the bounds $H(\A_i|\B_i R)_{\mathcal{D}_i\circ\M_i(\omega)}=H(\A_i|\B_i
R)_{\M_i(\omega)}\geq \rate(p)$. Defining $\Delta(f,p):=\rate(p)-f(p)$, this leads instead to
\begin{equation}\label{eq:infimum_bound}
H_\alpha(\textbf{A} \mathbf{D}|\textbf{B} E)_{\rho}\geq n \left(\Max(f)+\inf_{p \in \Q_{\mathfrak{S}}}\left(\Delta(f,p)-(\alpha-1)V(f,p)\right)-(\alpha-1)^2 K_\alpha(f)\right).
\end{equation}

Summarizing, we have the following modified version of the EAT.
\begin{theorem}\label{thm:new_eat}
  Let $\epsmooth\in(0,1)$, $\alpha\in(1,2)$, $r\in\mathbb{R}$, $f$ be a min-tradeoff function for a set of EAT channels $\mathfrak{S}=\{\M_i\}_i$ and $\rho_{\mathbf{ABX}E}$ be the state generated by a sequential application of these channels as in Definition~\ref{def:EAT-channels}. Then for any event $\Omega$ on $\mathbf{X}$ that implies $f(\freq_{\mathbf{X}})\geq r$ we have
  \begin{align*}
H_{\min}^{\epsmooth}(\mathbf{A}|\mathbf{B}E)_{\rho_{\mathbf{ABX}E|\Omega}}>&n r-\frac{\alpha}{\alpha-1}\log\left(\frac{1}{p_\Omega(1-\sqrt{1-\epsmooth^2})}\right)+\\&n\inf_{p \in \Q_{\mathfrak{S}}}\big(\Delta(f,p)-(\alpha-1)V(f,p)\big)-n(\alpha-1)^2K_\alpha(f),
  \end{align*}
where $V(f,p)$ and $K_\alpha(f)$ are defined in~\eqref{eq:V} and~\eqref{eq:K}.
\end{theorem}

The presence of the $\Delta(f,p)$ term yields a two-fold
advantage. Not only is it a positive contribution to the final
entropy, but it effectively constrains the probability distribution on
the error terms to be close to the actual value of the observed
probability distribution.  While we still take a worst-case
optimisation over distributions, we do it in a way that depends
explicitly on the min-tradeoff function so that our optimisation is no
longer independent of the actual statistics. By taking min-tradeoff
functions that are tangent to some convex rate function, $\Delta(f,p)$
will grow to dominate the error terms if $q$ varies too far from the
tangent point. To ensure convergence of the minimisation to a globally optimal point and hence a true lower bound in~\eqref{eq:infimum_bound}, we show
that, given certain conditions, computing the infimum is a convex optimisation problem. In Section~\ref{sec:Nonlocalgames}, we show that these conditions are applicable to device-independent cryptography with non-local games.

\begin{lemma}[Convexity of the objective]
\label{lem:convexopt}
Let $f$ be a min-tradeoff function, $\rate$ be a convex rate function, $Q_{\mathfrak{S}}$ be a convex set and $\alpha \in (1,2)$. Then,
\begin{equation}
\inf_{p \in \Q_{\mathfrak{S}}}\left(\Delta(f,p)-(\alpha-1)V(f,p)\right)
\end{equation}
is a convex optimisation problem.
\end{lemma}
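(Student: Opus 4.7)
The plan is to show that the objective
\[
g(p) \;:=\; \Delta(f,p) - (\alpha-1)\,V(f,p) - (\alpha-1)^2 K_\alpha(f)
\]
is convex in $p$ on the convex feasible set $\Q$ (convexity of $\Q$ having already been established in Lemma~\ref{lem:Qconvex}). Minimising a convex function over a convex set is a convex optimisation problem by definition, which is exactly the claim.

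First I would dispatch the easy pieces. The term $\Delta(f,p) = \rate(p) - f(p)$ is convex, since $\rate$ is convex by hypothesis and $f$ is affine (so $-f$ is affine, hence convex). The last summand $-(\alpha-1)^2 K_\alpha(f)$ does not depend on $p$ at all, so it contributes only a constant shift and is trivially convex.

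The substantive step is to show $V(f,p)$ is concave in $p$: since $\alpha-1 > 0$, the term $-(\alpha-1)V(f,p)$ is convex precisely when $V(f,\cdot)$ is concave. I would do this in two steps. First, writing $g(x) := f(\delta_x)$ as a fixed real function of $x$, note that
\[
\varq(f) \;=\; \sum_x p(x)\, g(x)^2 \;-\; \Bigl(\sum_x p(x)\, g(x)\Bigr)^{\!2},
\]
whose first summand is linear in $p$ while the second is the square of an affine functional of $p$ (hence convex); the difference is therefore concave in $p$. Second, view $V$ as a scalar function of $s = \varq(f)$, namely $V(s) = \tfrac{\ln 2}{2}\bigl(c + \sqrt{2+s}\bigr)^{2}$ with $c = \log(1+2 d_A^2) > 0$. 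Direct differentiation gives $V'(s) = \tfrac{\ln 2}{2}\bigl(1 + c/\sqrt{2+s}\bigr) > 0$ and $V''(s) = -\tfrac{c\ln 2}{4(2+s)^{3/2}} < 0$, so $V$ is non-decreasing and concave in $s$ on $[0,\infty)$. The standard composition rule (a non-decreasing concave scalar function composed with a concave function is concave) then yields that $p \mapsto V(f,p)$ is concave.

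The main obstacle is the concavity of $V(f,p)$. The naive move -- squaring the positive concave quantity $c + \sqrt{2+\varq(f)}$ -- does not preserve concavity in general, so one cannot conclude directly from the concavity of the inner expression. The trick is to factor all of the $p$-dependence through the single scalar $s=\varq(f)$ and check by an explicit second-derivative computation that $V$, seen as a function of $s$, is both concave and non-decreasing; only then does the composition rule apply. After that, assembling convex plus convex plus constant gives convexity of $g$ on $\Q$, and invoking Lemma~\ref{lem:Qconvex} for the feasible set completes the argument.
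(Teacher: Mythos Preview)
Your proof is correct and follows essentially the same approach as the paper: convexity of $\Q$ from Lemma~\ref{lem:Qconvex}, convexity of $\Delta(f,p)$ from $\rate$ convex and $f$ affine, concavity of $V(f,p)$ via concavity of $\varq(f)$, and $K_\alpha(f)$ constant in $p$. The only minor variation is in how you verify concavity of the outer map $s \mapsto \tfrac{\ln 2}{2}(c+\sqrt{2+s})^2$: you compute $V'(s)>0$ and $V''(s)<0$ and invoke the composition rule for a non-decreasing concave function composed with a concave function, whereas the paper simply expands the square and observes that each resulting summand (a constant, a positive multiple of $\sqrt{2+\varq(f)}$, and a positive multiple of $\varq(f)$) is concave in $p$.
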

\begin{proof}
  The first term $\Delta(f,p)=\rate(p)-f(p)$ is convex since $\rate(p)$ is convex and $f(p)$ is affine. The second term can be written as $V(f,p)=c(d+\sqrt{2+\varq(f)})^2$ where $c$ and $d$ are positive constants. We note that $\varq(f)$ is the variance of the function $g(x)=f(\delta_x)$. Variance is concave in the probability distribution and the square root is an increasing concave function, hence $\sqrt{2+\varq(f)}$ is concave. Expanding $c(d+\sqrt{2+\varq(f)})^2$ we find that it is a non-negatively weighted sum of concave functions, hence also concave. This implies that $-(\alpha-1)V(f,p)$ is convex. Therefore the objective function is a non-negatively weighted sum of functions that are convex in $p$ and hence is itself a convex function. By assumption, the set being optimised over $Q_{\mathfrak{S}}$ is convex and minimising a convex function over a convex set is a convex optimisation problem.
\end{proof}

\subsection{Spot-checking protocols}
The spot-checking format of the protocol enforces a particular structure in the probability distributions over the score register that can be produced by the protocol. In particular, the expected distributions over $X_i$ for each round $i$ must now take the form
\begin{equation}\label{eq:structured_distributions}
\mathbb{P}[X_i = x] = \begin{cases}
\gamma q(x) & \text{for }x\neq\perp \\
(1-\gamma) &\text{for }x=\perp
\end{cases},
\end{equation}
where $q$ is a distribution over the test-round scores for round $i$. To capture this structure in the construction
of our min-tradeoff functions we follow~\cite[Section 5]{DF}, by first defining a min-tradeoff function that only takes the statistics of a test round and then extending the domain of the function to include the no-test symbol $\perp$. Formally, we
define infrequent sampling channels as follows.
\begin{definition}[Infrequent sampling channel]
\label{def:infreq}
An infrequent sampling channel, $\M_i: R_{i-1}\to \A_i \B_i \X_i R_i$ is a channel that decomposes according to
\begin{equation}
\M_i(\cdot)=\gamma \M^{\mathrm{test}}_i(\cdot)+ (1-\gamma)\M_i^{\mathrm{gen}}(\cdot)\otimes \ketbra{\bot}{\bot}_{X_i}
\end{equation}
where $\M^{\mathrm{test}}_i: R_{i-1}\to \A_i \B_i \X_i R_i$ with $\bra{\perp}\M^{\mathrm{test}}_i(\rho_{R_{i-1}E})_{X_i} \ket{\perp} = 0$ for any quantum state $\rho_{R_{i-1}E}$  and $\M^{\mathrm{gen}}_i: R_{i-1}\to \A_i \B_i R_i$.
\end{definition}
Thus, on round $i$ the channel $\M^{\mathrm{gen}}_i$ occurs with probability $1-\gamma$ and this event is recorded with $\bot$ in the $X_i$ register and otherwise a test channel acts with the score is written to the $X_i$ register. Now let $\mathfrak{S}$ be the set of channels from which $\M^{\mathrm{test}}$ can be drawn, and $\mathfrak{S}^\gamma$ be the set of associated infrequent sampling channels, where we suppress the round index for conciseness. We define the associated set of infrequent sampling distributions $\Q_{\mathfrak{S}}^\gamma$ as
\begin{equation}
\Q^\gamma_{\mathfrak{S}} = \left\{q_X: q_X(\bot)=(1-\gamma) \text{ and }q_X(x)=\gamma q'_X(x)\text{ otherwise, where }q_X'\in Q_{\mathfrak{S}}\right\}.
\end{equation}
Note that $\Qgamma_{\mathfrak{S}}$ inherits convexity from $\Q_{\mathfrak{S}}$ if the latter is convex. We also define $\Gamma^\gamma_\mathfrak{S}$ where $\M$ is an infrequent sampling channel according to
\begin{align}\label{eq:Gammaq2}
\Gamma^\gamma_\mathfrak{S}(q)=\left\{ (\omega_{R E},\M): \M(\omega_{RE})_{X}=\sum_{x} q(x)\ketbra{x}{x}\text{ and }\M^{\mathrm{test}}\in\mathfrak{S}\right\}.
\end{align}

The main result of this section is the following.
\begin{lemma}
Let $g$ be an affine function satisfying
\begin{equation}\label{eq:gq}
g(q)\leq\inf_{(\omega_{RE},\M)}\left\{H(A|BE)_{\M(\omega_{RE})}
:
\M^{\mathrm{test}}(\omega_{RE})_{X}= \sum_{x \neq \perp}q(x)\ketbra{x}{x}\right\},
\end{equation}
for all $q \in \Q_{\mathfrak{S}}$ and with $\M,\,\M^{\mathrm{test}}$ defined as in Definition~\ref{def:infreq}. Then for $c_{\perp} \in \mathbb{R}$, the function
\begin{equation}
\label{eq:Fp}
f(\delta_x)= \begin{cases}
\frac{1}{\gamma}g(\delta_x)+(1-\frac{1}{\gamma})c_{\perp} &\text{ if }x\neq\perp \\
c_\perp &\text{ if }x=\perp
\end{cases}
\end{equation}
is a min-tradeoff function for $\mathfrak{S}^\gamma$.
\end{lemma}
\begin{proof}
  Because of the structure of infrequent sampling channels the set $\Gamma^\gamma_{\mathfrak{S}}(p)$ (cf.\ \eqref{eq:Gammaq}) is non-empty if and only if $p \in \Qgamma_\mathfrak{S}$. Moreover, the infimum in the definition of min-tradeoff functions (cf.\ \eqref{eq:rate_function_def}) is finite only when $p \in \Qgamma_\mathfrak{S}$. In turn, we may restrict our attention to distributions of the form $\Qgamma_\mathfrak{S}$ when defining a min-tradeoff function. Now, for $p \in \Qgamma_\mathfrak{S}$ we have
\begin{align*}
\sum_x p(x) f(\delta_x)&=(1-\gamma)c_{\perp}+\gamma \sum_{x\neq\perp} q(x) \left(\frac{1}{\gamma}g(\delta_x) +\left(1-\frac{1}{\gamma}\right) c_{\perp}\right)\\
&=(1-\gamma) c_{\perp} +\sum_{x\neq\perp} q(x) g(\delta_x) - (1-\gamma)c_{\perp}\\
&=g(q)\,,
\end{align*}
with $q \in \Q_\mathfrak{S}$. Finally, we have
\begin{align*}
f(p) &= g(q) \\
&\leq \inf_{(\omega_{RE},\M)}\left\{H(A|BE)_{\M(\omega_{RE})}
:
\M^{\mathrm{test}}(\omega_{RE})_{X}= \sum_{x \neq \perp}q(x)\ketbra{x}{x}\right\} \\
&= \inf_{(\omega_{RE},\M)}\left\{H(A|BE)_{\M(\omega_{RE})}
:
\M(\omega_{RE})_{X}= \sum_{x \neq \perp}\gamma q(x)\ketbra{x}{x} + (1-\gamma) \ketbra{\perp}{\perp}\right\} \\
&= \inf_{(\omega_{RE},\M)\in\Gamma_{\mathfrak{S}}^\gamma(p)} H(\A|\B E)_{\M(\omega_{RE})}\,,
\end{align*}
where the final two lines hold because $\M$ is restricted, by spot-checking, to be an infrequent sampling channel. The function $f$ is therefore a min-tradeoff function for $\mathfrak{S}^\gamma$.
\end{proof}

We obtain expressions for $V(f,p)$ and $K_\alpha(f)$ by noting that
\begin{align*}
\Max(f)&=\max\left[\frac{1}{\gamma}\Max(g)+\left(1-\frac{1}{\gamma}\right)c_{\perp},\,c_{\perp}\right],\\
\Min_{\Qgamma_\mathfrak{S}}(f)&=\Min_{\Q_{\mathfrak{S}}}(g),\\
\varq(f)&\leq \sum_{x\neq\perp}\frac{q(x)}{\gamma}(c_{\perp}-g(\delta_x))^2.
\end{align*}
The first expression follows from the definition of $\Max(f)$, and the second from the fact that for any $p \in \Qgamma_\mathfrak{S}$ we have that $f(p) = g(q)$ for some $q \in \Q_\mathfrak{S}$. For $\varq(f)$, we explicitly calculate
\begin{align*}
\varq(f)&=\sum_x p(x)f(\delta_x)^2-f(p)^2\\
&=\sum_{x\neq\perp} \gamma q(x)\left(\frac{1}{\gamma}g(\delta_x)+\left(1-\frac{1}{\gamma}\right)c_{\perp}\right)^2+(1-\gamma)c_{\perp}^2-g(q)^2\,,
\end{align*}
where we have used that $f(p)=g(q)$ and substituted the explicit forms of $p(x)$ and $f(\delta_\bot)$. We can expand the term in the sum
\begin{align*}
\sum_{x\neq\perp} \gamma q(x)\left(\frac{1}{\gamma}g(\delta_x)+\left(1-\frac{1}{\gamma}\right) c_{\perp}\right)^2&=\sum_{x\neq\perp} \gamma q(x)\left(\frac{1}{\gamma^2}g(\delta_x)^2+2\frac{1}{\gamma}\left(1-\frac{1}{\gamma}\right)g(\delta_x)c_{\perp}+\left(1-\frac{1}{\gamma}\right)^2 c_{\perp}^2\right)\\
&=\sum_{x\neq\perp} \frac{q(x)}{\gamma} g(\delta_x)^2+2\left(1-\frac{1}{\gamma}\right)g(q)c_{\perp}+\frac{(\gamma-1)^2}{\gamma}c_{\perp}^2\,.
\end{align*}
Inserting the final line into the formula for $\varq(f)$ we can rearrange the expression to get
\begin{align*}
\varq(f)=\frac{1}{\gamma}\sum_{x\neq\perp} q(x)\left(c_{\perp}- g(\delta_x)\right)^2- (c_{\perp}-g(q))^2.
\end{align*}
As the final term is strictly negative we arrive at the desired result.

\subsection{Non-local games and convexity}
\label{sec:Nonlocalgames}
We have thus far focused on the EAT without specialising to non-local games (and in particular the CHSH game). Our notation has accordingly been consistent with the previous literature on the EAT~\cite{DFR,DF}. Unfortunately, there is some incompatibility between this notation and the standard notation for non-local games. In the subsequent we will alter the notation,  we replace $\textbf{X}$ with a new score register $\textbf{U}$, $\textbf{A}$ with the joint outputs $\textbf{AB}$, and $\textbf{B}$ with the joint inputs $\textbf{XY}$. In short, we make the following substitutions, $\textbf{X}\to \textbf{U}$, $\textbf{A}\to \textbf{AB}$ and $\textbf{B}\to \textbf{XY}$.

In what follows, we suppress the round index, since we will be interested in arbitrary states and measurements, and the classical registers for any round are isomorphic to those on any other. We define the set of quantum channels consistent with the no-signalling conditions. To do so, we partition the Hilbert space $R$ into Alice's and Bob's parts $R_{A}$ and $R_{B}$. We rewrite~\eqref{eq:EAT-instruments} in terms of the new notation by substituting $x\to u$, $a\to ab$ and $b\to xy$. Any channel $\M$ that is used in the protocol can then be decomposed as
\begin{align}\label{eq:EAT-instruments2}
\M(\rho_{R_AR_B}) = \sum_{a,b,x,y} \ketbra{abxy}{abxy}_{ABXY}\otimes\ketbra{u(a,b,x,y)}{u(a,b,x,y)}_U\otimes\M^{abxy}(\rho_{R_{A}R_{B}}).
\end{align}

\begin{definition}[No-signalling channels, $\mathfrak{N}$]
  Let $\M$ be a channel acting on $R_AR_B$ that implements a map of the form~\eqref{eq:EAT-instruments2}. $\M$ is a \emph{no-signalling channel} if there exists a distribution $p(xy)$ such that
  \begin{align}
\label{eq:EAT_instruments3}
\M^{abxy}(\rho_{R_AR_B})=p(xy) (\mathcal{E}^{xa}\otimes\mathcal{F}^{yb})(\rho_{R_AR_B}),
\end{align}
where, for each $x$, $\{\mathcal{E}^{xa}\}_a$ is an instrument acting on $R_A$ and, for each $y$, $\{\mathcal{F}^{yb}\}_b$ is an instrument acting on $R_B$.

We use $\mathfrak{N}$ to denote the set of all no-signalling channels consistent with some specified $p(xy)$.
\end{definition}

\begin{lemma}\label{lem:Qconvex}
Given a distribution $p(xy)$, the set, $\Q_{\mathfrak{N}}$, of achievable distributions on $U$ using no-signalling channels is convex.
\begin{proof}
It is well-known that the set of distributions $p(ab|xy)$ that can arise as the result of performing measurements on two remote sites, is convex (see, e.g.,~\cite{Cirelson93}). 
We are interested in the convexity of the probability distribution on the (classical) $U$ register. This register contains the outcome of a deterministic function on $A$, $B$, $X$ and $Y$. The elements of the probability distribution $p(u)$ will hence be sums of elements of $p(abxy)$ so that convexity is inherited.
\end{proof}
\end{lemma}

\begin{lemma}[Convexity of optimal rate functions]
\label{lem:ConvexRate}
The function $\rateopt(q):=\inf_{(\omega,\M)\in \Gamma_{\mathfrak{N}}(q)}H(AB|XY R)_{\M(\omega)}$ is a convex function over the set $\mathcal{Q}_\mathfrak{N}$.
\begin{proof}
  For sufficiently small $\epsilon>0$, there exist states $\omega_{R E}$ and $\omega'_{R E}$, channels $\M ,\M'\in\mathfrak{N}$, with distributions on $U$ given by $q=\M(\omega)_U$ and $q'=\M'(\omega')_U$ respectively, such that $H(AB|XYE)_{\M(\omega)}=\rateopt(q)+\epsilon$ and likewise $H(AB|XYE)_{\M'(\omega')}=\rateopt(q')+\epsilon$. 
Let
\begin{align*}
\rho_{REF}=\lambda\,\omega_{R E}\otimes |000\rangle\langle 000|_{F_A F_B F_E} +(1-\lambda)\, \omega'_{R E}\otimes |111\rangle\langle 111|_{F_A F_B F_E}\,,
\end{align*}
where $\lambda\in[0,1]$ and where we have defined the joint system $F=F_AF_BF_E$.  We can then define a composite channel $\N$ of the same form of Equation~\eqref{eq:EAT-instruments2} for which
\begin{align}
\N(\rho_{REF})=\lambda\M(\omega_{RE})\otimes\ketbra{000}{000}_F+(1-\lambda) \M'(\omega'_{RE})\otimes\ketbra{111}{111}_F\,.
\end{align}
The auxiliary flags act ensure that the correct instruments for $\M$ and $\M'$ are applied only to the appropriate state, and we obtain


Letting $E'=EF_E$, it follows that
\begin{align*}
H(AB|XYE')_{\N(\rho)}&= \lambda H(AB|XYE)_{\M(\omega)} + (1-\lambda) H(AB|XYE)_{\M'(\omega')}\\
&= \lambda \,\rateopt(q) + (1-\lambda)\,\rateopt(q')+\epsilon\,,
\end{align*}
where we have expanded in the second line by conditioning on the classical system $F_E$.  Recall that in the statement of the definition of $\rate$ functions, $E$ is permitted to be an arbitrary auxiliary system, and thus $E'$ also fulfils this role. Additionally, $R_A$ and $R_B$ are permitted to be arbitrary in the device-independent case. By letting $R'_A= R_A F_A$ and similarly for $R'_B$ we can see that the tensor product structure of measurements is maintained and $\N$ remains in the set $\mathfrak{N}$. Since the $U$ register on $\rho$ will be distributed according to $q''=\lambda q +(1-\lambda)q'$, we have shown that
\begin{align*}
\inf_{\omega\in \Gamma_{\mathfrak{N}}(q'')}H(AB|XY R)_{\M(\omega)}\leq \lambda \,\rate(q)+(1-\lambda)\, \rate(q')\,,
\end{align*}
since $\epsilon$ is arbitrary. Hence $\rateopt(q)$ is convex.
\end{proof}
\end{lemma}
\begin{corollary}
For a device-independent, spot-checking protocol, the infimum in Equation~\eqref{eq:infimum_bound} is a convex optimisation problem.
\begin{proof}
This result is a simple combination of Lemmas~\ref{lem:convexopt},~\ref{lem:Qconvex} and~\ref{lem:ConvexRate} along with the observation that $Q_\mathfrak{N}^\gamma$ inherits convexity from $Q_{\mathfrak{N}}$.
\end{proof}
\end{corollary}

In the CHSH game we have a binary score, i.e., $U \in \{0,1\}$ for test rounds. For quantum systems the probability of winning (scoring one) is bounded by $s\in\left[1/2(1-1\sqrt{2}), 1/2(1+1/\sqrt{2})\right]$. A rate function in terms of the CHSH score (without spot-checking) was first derived in~\cite{PABGMS}, and has been applied previously in the context of device-independent cryptography~\cite{ADFRV}. This rate function, with $q=\{1-s,s\}$~\footnote{Here we are specifying the distribution in the form $\{q(0),q(1)\}$.}, is defined as
\begin{align*}
   \rate_{\mathrm{CHSH}}(q)=\chi(s)=
\begin{cases}
    1-H_{\mathrm{bin}}\left(\frac{1}{2}+\frac{1}{2}\sqrt{16 s(s-1)+3}\right),& \text{if } 3/4\leq s\leq \frac{1}{2}(1+\frac{1}{\sqrt{2}}) \text{ or } \frac{1}{2}(1-\frac{1}{\sqrt{2}})\leq s\leq 1/4  \\
    0,              & \text{if } 1/4\leq s\leq 3/4\\
    \text{undefined} & \text{otherwise}
\end{cases}
\end{align*}
where $H_{\mathrm{bin}}(x):= -x\log(x) - (1-x)\log(1-x)$ is the binary Shannon entropy. We note that the rate function is undefined for winning probabilities that are not quantum-achievable. It also only takes into account the randomness obtained from one of Alice's devices and does not consider the full measurement statistics, so one could hope to gain more entropy if both parties' outputs are taken into account~\cite{BRC}. As $\chi$ is a convex function, we can obtain a family of min-tradeoff functions by taking the tangent to this rate function at any point $t \in \left(1/2(1-1/\sqrt{2}), 1/2(1+1/\sqrt{2})\right)$. Denoting these functions by $g_t$, we have
\begin{align}
\label{eq:Gp}
g_t(\{1-s,s\})=\chi(t)+(s-t)\,\chi'(t)\,,
\end{align}
where the prime indicates the derivative. As these functions are affine we can uniquely extend their domains to include all $s\in[0,1]$. Then, we can define our min-tradeoff function as
\begin{equation}
f_t(\delta_u)= \begin{cases}
	\frac{1}{\gamma}g_{t}(\delta_u)+(1-\frac{1}{\gamma})c_{\perp} &\text{ if }u\in\{0,1\} \\
	c_\perp &\text{ if }u=\perp
\end{cases}.
\end{equation}
Note that due to the spot checking structure, for a fixed $\gamma \in (0,1)$, $\chi$ also defines a rate function on $\Q_{\mathfrak{N}}$ via $\rate(\{\gamma(1-s),\gamma s,1-\gamma\})=\chi(s)$.  Recall also that we have the freedom to choose both the tangent point $t$ and the value of $c_{\perp}$; we later optimize these.  The family of functions $f_t$ can then be applied to the entropy accumulation theorem. We now state the main theorem.
\begin{theorem}[Entropy accumulation for the CHSH game]
\label{Thm:EAT}
Let $\rho_{\mathbf{A}\mathbf{B}\mathbf{XYU}E}$ be a CQ state (classical on
$\mathbf{A}\mathbf{B}\mathbf{XYU}$) produced by the CHSH protocol from the main
text with test probability $\gamma \in (0,1)$. Let $\wexp \in (3/4,1/2(1+1/\sqrt{2})]$, $\delta > 0$ and $n \in \mathbb{N}$. Let $\Omega$ refer to the event
specified by
$|\lbrace U_i|U_i=0\rbrace|\leq n\gamma(1-(\wexp-\delta))$,
$p_\Omega$ the probability of this event occurring for
$\rho_{\mathbf{A}\mathbf{B}\mathbf{XYU}E}$, and
$\rho_{\mathbf{A}\mathbf{B}\mathbf{XYU}E|\Omega}$ the state conditioned on this
occurrence. Let $\epsmooth\in(0,1)$, $\alpha \in (1,2)$ and let $\chi$ and $f_{t}$ be defined as
above. Then for any $r$ such that $f_{t}(\freq_{\mathbf{U}})\geq r$ for
every event in $\Omega$ we have
\begin{align}
H_{\min}^{\epsmooth}(\mathbf{AB}|\mathbf{XY}E)_{\rho_{\mathbf{A}\mathbf{B}\mathbf{XYU}E|\Omega}}>n r-&\frac{\alpha}{\alpha-1}\log\left(\frac{1}{p_{\Omega}(1-\sqrt{1-\epsmooth^2})}\right) \nonumber\\
&+n\inf_{p \in \Qgamma_{\mathfrak{N}}}\bigl(\Delta(f_t,p)-(\alpha-1)V(f_t,p)\bigr)-n(\alpha-1)^2 K_\alpha(f_t),\label{eq:mainEAT}
\end{align}
where\footnote{Note that when using~\eqref{eq:V} and~\eqref{eq:K}, because of our earlier substitution $\mathbf{A}\to\mathbf{AB}$, we have replaced the $d_A$ in~\eqref{eq:V} and~\eqref{eq:K} by $d_Ad_B=4$.}
\begin{align*}
\Delta(f_t,p)&= \chi(p(1)/\gamma)-f_t(p) \\
V(f_t,p)&= \frac{\ln 2}{2} \left(\log(9) +\sqrt{2+\varq(f_{t})}\right)^2\\
K_\alpha(f_t)&= \frac{1}{6(2-\alpha)^3 \ln 2}2^{(\alpha-1)(2+\Max(f_{t})-\Min_{\Qgamma_{\mathfrak{N}}}(f_{t}))}\ln^3 \left(2^{2+\Max(f_{t})-\Min_{\Qgamma_{\mathfrak{N}}}(f_{t})}+\mathrm{e}^2\right).
\end{align*}
\end{theorem}
To use this theorem we need to find a suitable $r$.  To do so we consider the following:
\begin{align*}
  f_t(\freq_{\mathbf{U}})&=\freq_{\mathbf{U}}(0)f_t(\delta_0)+\freq_{\mathbf{U}}(1)f_t(\delta_1)+\freq_{\mathbf{U}}(\bot)f_t(\delta_\bot)\\
                         &=\freq_{\mathbf{U}}(0)\left(\frac{g_t(\delta_0)}{\gamma}+\left(1-\frac{1}{\gamma}\right)c_\bot\right)+\freq_{\mathbf{U}}(1)\left(\frac{g_t(\delta_1)}{\gamma}+\left(1-\frac{1}{\gamma}\right)c_\bot\right)+\freq_{\mathbf{U}}(\bot)c_{\bot}\\
                         &=\frac{\freq_{\mathbf{U}}(0)}{\gamma}\left(g_t(\delta_0)-c_\bot\right)+\frac{\freq_{\mathbf{U}}(1)}{\gamma}\left(g_t(\delta_1)-c_\bot\right)+c_\bot
                                 \,.
\end{align*}

For events in $\Omega$ we have
$\freq_{\mathbf{U}}(0)\leq\gamma(1-(\omega_{\mathrm{exp}}-\delta))$.  Thus, if we choose $c_\bot\geq g_t(\delta_0)$ we find
\begin{align}\label{eq:fru}
f_t(\freq_{\mathbf{U}})&\geq(1-(\omega_{\mathrm{exp}}-\delta))\left(g_t(\delta_0)-c_\bot\right)+\frac{\freq_{\mathbf{U}}(1)}{\gamma}\left(g_t(\delta_1)-c_\bot\right)+c_\bot\,.
\end{align}
Furthermore, if $c_\bot\leq g_t(\delta_1)$ then the second term is non negative and we get the bound
\begin{align*}
  f_t(\freq_{\mathbf{U}})&\geq(1-(\omega_{\mathrm{exp}}-\delta))\left(g_t(\delta_0)-c_\bot\right)+c_\bot\,.
\end{align*}
This is increasing with $c_\bot$, so it is best to take $c_\bot=g_t(\delta_1)$ giving
\begin{align*}
f_t(\freq_{\mathbf{U}})&\geq(1-(\omega_{\mathrm{exp}}-\delta))g_t(\delta_0)+(\omega_{\mathrm{exp}}-\delta)g_t(\delta_1)\\
&=g_t(\{1-(\omega_{\mathrm{exp}}-\delta),\omega_{\mathrm{exp}}-\delta\})\\
&=\chi(t)+(\omega_{\mathrm{exp}}-\delta-t)\chi'(t)\,.
\end{align*}
It follows that we can take $r=\chi(t)+(\omega_{\mathrm{exp}}-\delta-t)\chi'(t)$ in Theorem~\ref{Thm:EAT}.

Note that Theorem~\ref{Thm:EAT} holds for any $t$ and $\alpha$ and we optimize these to improve the bound. In other words, in order to evaluate our output entropy we take $p_{\Omega}=\epeat$ (cf.\ the discussion in Section~\ref{app:Error}), substitute the above expression for $r$ and take the supremum over $t$ and $\alpha$ on the right hand side of~\eqref{eq:mainEAT}.  Once a lower bound for the smooth min-entropy has been obtained, a randomness extraction procedure ensures that the security definitions are satisfied.

\begin{remark}
  We could alternatively have defined the protocol with the abort condition $|\{U_i:U_i=1\}|<n\gamma(\omega_{\mathrm{exp}}-\delta)$. In this case, taking $c_\bot=g_t(\delta_1)$ gives a similar bound to above, but the error parameters turn out to be worse.

  Similarly, if the abort condition was changed such that the protocol aborts if either $|\{U_i:U_i=0\}|>n\gamma(1-(\omega_{\mathrm{exp}}-\delta))$ or  $|\{U_i:U_i=1\}|<n\gamma(\omega_{\mathrm{exp}}-\delta)$, then we could take $g_t(\delta_0)\leq c_\bot\leq g_t(\delta_1)$ and get the same expression for $f_t(\freq_{\mathbf{U}})$.
\end{remark}

\section{System characterization}

\subsection{Determination of single photon efficiency}\label{efficiency}

We define the single photon heralding efficiency as $\eta_A=C/N_B$ and $\eta_B=C/N_A$ for Alice and Bob, in which two-photon coincidence events $C$ and single photon detection events for Alice $N_A$ and Bob $N_B$ are measured in the experiment. The heralding efficiency is listed in Table~\ref{tab:OptEffAB}, where $\eta^{\mathrm{sc}}$ is the efficiency of coupling entangled photons into single mode optical fibre, $\eta^{\mathrm{so}}$ the optical efficiency due to limited transmittance of optical elements in the source, $\eta^{\mathrm{fibre}}$ the transmittance of fibre linking source to measurement station, $\eta^{\mathrm{m}}$ the efficiency for light passing through the measurement station, and $\eta^{\mathrm{det}}$ the single photon detector efficiency.  The heralding efficiency and the transmittance of individual optical elements are listed in Table~\ref{tab:OptEffAB}, where $\eta^{\mathrm{so}}$, $\eta^{\mathrm{fibre}}$, $\eta^{\mathrm{m}}$, $\eta^{\mathrm{det}}$ can be measured with classical light beams and NIST-traceable power meters, and $\eta^{\mathrm{fibre}}$, $\eta^{\mathrm{m}}$ only exist in the case of the space-like separated setup.

\begin{table}[htbp]
\centering
\begin{tabular}{cc|c|ccccc}
\hline
\multicolumn{2}{c|}{Setup}                                & heralding efficiency ($\eta$) & $\eta^{\mathrm{sc}}$    & $\eta^{\mathrm{so}}$    & $\eta^{\mathrm{fibre}}$ & $\eta^{\mathrm{m}}$ & $\eta^{\mathrm{det}}$ \\ \hline
\multicolumn{1}{c|}{\multirow{2}{*}{space-like}}  & Alice & 80.41\%                       & \multirow{2}{*}{93.5\%} & \multirow{2}{*}{95.9\%} & \multirow{2}{*}{99.0\%} & 94.8\%              & 95.5\% \\
\multicolumn{1}{c|}{}                             & Bob   & 82.24\%                       &                         &                         &                         & 95.2\%              & 97.3\% \\ \hline
\multicolumn{1}{c|}{\multirow{2}{*}{main}}        & Alice & 83.40\%                       & \multirow{2}{*}{91.2\%} & 93.8\%                  & \multirow{2}{*}{--}     &\multirow{2}{*}{--}  & 97.5\% \\
\multicolumn{1}{c|}{}                             & Bob   & 84.80\%                       &                         & 94.3\%                  &                         &                     & 98.6\% \\ \hline
\end{tabular}
\caption{\bf Characterization of optical efficiencies in the experiments. }
\label{tab:OptEffAB}
\end{table}

\subsection{Quantum state and measurement bases}
To maximally violate the Bell inequality for our main experiment, we aim to create a non-maximally entangled two-photon state~\cite{CHSH} $\cos(\alpha)\ket{HV}+\sin(\alpha)\ket{VH}$, where $\alpha = 27.76^\circ$ and set measurement bases to be $A_1=-82.30^\circ$ and $A_2=-118.72^\circ$ for Alice, and $B_1=7.70^\circ$ and $B_2=-28.72^\circ$ for Bob, respectively. We also optimize the mean photon number to be 0.52 to maximize CHSH score.

We measure diagonal/anti-diagonal visibility in the basis set ($45^\circ, -\alpha$), ($90^\circ+\alpha, 45^\circ$) for minimum coincidence, and in the basis set ($45^\circ, 90^\circ-\alpha$), ($\alpha, 45^\circ$) for maximum coincidence, where the angles represent measurement basis  $\cos(\theta)\ket{H}+\sin(\theta)\ket{V}$ for Alice and Bob. By setting the mean photon number to $\mu=0.0034$ to suppress the multi-photon effect, we measure the visibility to be $99.4\%$ and $98.5\%$ in horizontal/vertical basis and diagonal/anti-diagonal basis.

We perform quantum state tomography measurement of the non-maximally entangled state, with result shown in Figure~\ref{Fig:Tomo}. The state fidelity is $99.06\%$. We attribute the imperfection to multi-photon components, imperfect optical elements, and imperfect spatial/spectral mode matching.

As for the experiment with space-like separation, we use the same analysis method and perform the same tests. The parameters used and a comparison of the results are listed in the Table~\ref{tab:state}.

\begin{figure}[htbp]
\centering
\resizebox{16cm}{!}{\includegraphics{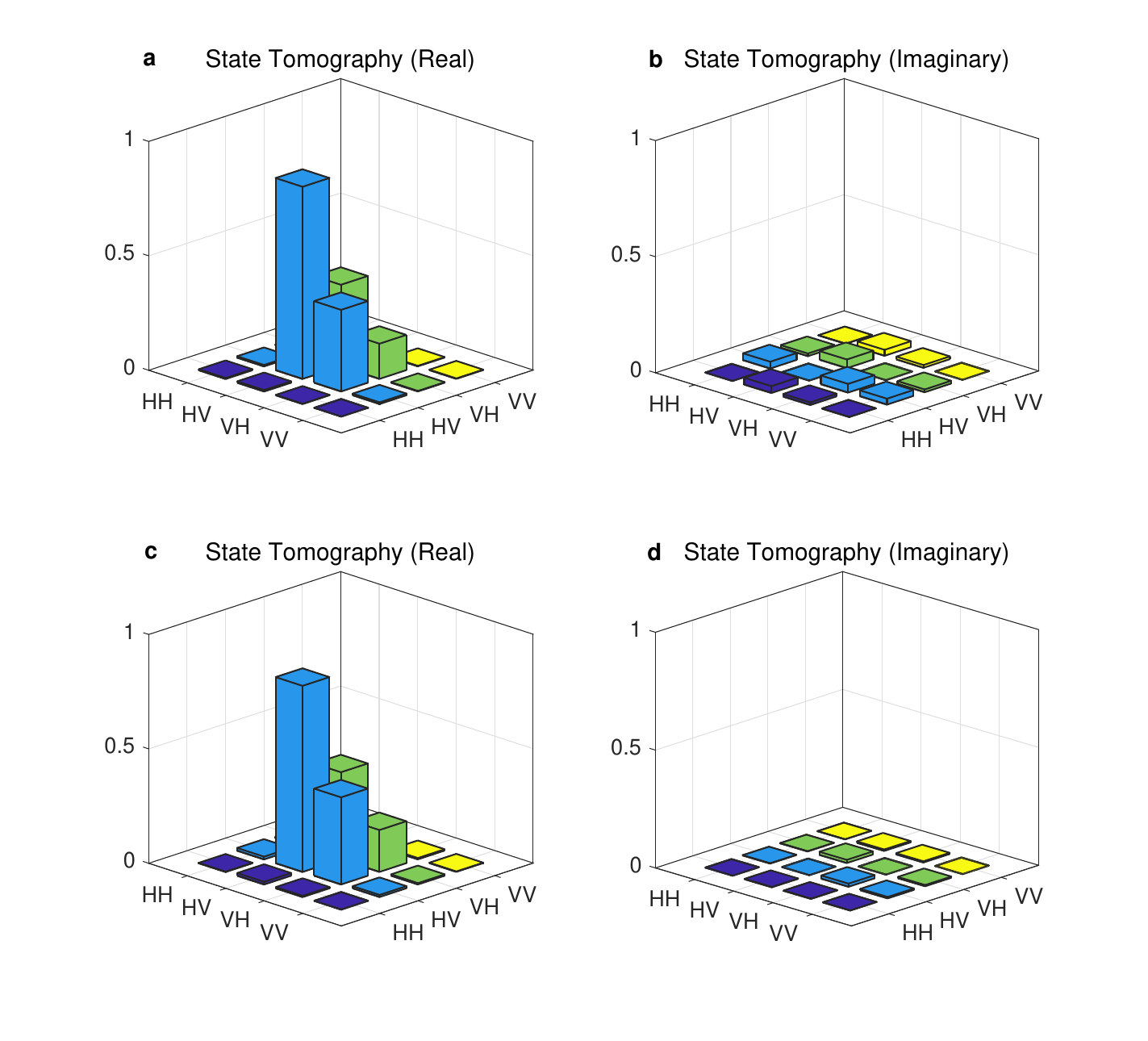}}
\caption{{\bf Tomography of the produced two-photon state in the experiments,} with real and imaginary components shown in {\bf a} and {\bf b} for the space-like experiment and real and imaginary components shown in {\bf c} and {\bf d} for the main experiment, respectively.}
\label{Fig:Tomo}
\end{figure}

\begin{table}[htbp]
\centering
\begin{tabular}{c|c|cc}
\hline
\multicolumn{2}{c|}{experiment}                             & space-like                              & main                           \\ \hline
\multicolumn{2}{c|}{entangled state: $\alpha$}              & $24.30^\circ$                           & $27.76^\circ$                           \\ \hline
\multirow{2}{*}{measurement bases} & Alice                  & $A_1=-83.08^\circ$; $A_2=-118.59^\circ$ & $A_1=-82.30^\circ$; $A_2=-118.72^\circ$ \\
                                   & Bob                    & $B_1=6.92^\circ$; $B_2=-28.59^\circ$    & $B_1=7.70^\circ$; $B_2=-28.72^\circ$    \\ \hline
\multirow{2}{*}{visibility}        & horizontal/vertical    & $99.4$\%                                & $99.4$\%                                  \\
                                   & diagonal/anti-diagonal & $98.5$\%                                & $98.5$\%                                  \\ \hline
\multicolumn{2}{c|}{fidelity}                               & $99.10$\%                               & $99.06$\%                                 \\ \hline
\end{tabular}
\caption{\bf Quantum states and measurement bases in the experiment. }
\label{tab:state}
\end{table}

\subsection{Spacetime configuration of the experiment}\label{spacetime}
To close the locality loophole in the experiment with space-like separation, space-like separation must be satisfied between relevant events at Alice and Bob's measurement stations: the state measurement events by Alice and Bob, measurement event at one station and the setting choice event at the other station (Figure~\ref{Fig:SpaceTimeSupp}). We then obtain
\begin{equation}
	\begin{cases}
(|SA| + |SB|) / c > T_{\mathrm{E}} - (L_{\mathrm{SA}} - L_{\mathrm{SB}}) / c + T_{\mathrm{QRNG}}^{A} + T_{\mathrm{Delay}}^{A} + T_{\mathrm{PC}}^{A} +T_{\mathrm{M}}^{A}, \\
(|SA| + |SB|) / c > T_{\mathrm{E}} + (L_{\mathrm{SA}} - L_{\mathrm{SB}}) / c + T_{\mathrm{QRNG}}^{B} + T_{\mathrm{Delay}}^{B} + T_{\mathrm{PC}}^{B} +T_{\mathrm{M}}^{B},
	\end{cases}
\label{Eq:SC1}
\end{equation}
where $|SA|$ = 93 m ($|SB|$ = 90 m) is the free space distance between entanglement source and Alice's (Bob's) measurement station, $T_{\mathrm{E}}$ = 10 ns is the generation time for entangled photon pairs, which is mainly contributed by the 10 ns pump pulse duration, $L_{\mathrm{SA}}$ = 191 m ($L_{\mathrm{SB}}$ = 173.5 m) is the effective optical path which is mainly contributed by the long fibre (130 m, 118 m) between source and Alice/Bob's measurement station, $T_{\mathrm{QRNG}}^{A}=T_{\mathrm{QRNG}}^{B}$ = 96 ns is the time elapse for QRNG to generate a random bit, $T_{\mathrm{Delay}}^{A}$ = 270 ns ($T_{\mathrm{Delay}}^{B}$ = 230 ns) is the delay between QRNG and Pockels cells, $T_{\mathrm{PC}}^A$ = 112 ns ($T_{\mathrm{PC}}^B$ = 100 ns) including the internal delay of the Pockels Cells (62 ns, 50 ns) and the time for Pockels cell to stabilize before performing single photon polarization state projection after switching which is 50 ns, which implies that the experimental time is able to be shortened by increasing the repetition rate of the experiment because the low testing probability reduced the impact of the modulation rate of the Pockels cells, $T_{\mathrm{M}}^A$ = 55 ns ($T_{\mathrm{M}}^B$ = 100 ns) is the time elapse for SNSPD to output an electronic signal, including the delay due to fibre and cable length.

The measurement independence requirement is satisfied by space-like separation between entangled-pair creation event and setting choice events, so we can have

\begin{equation}
	\begin{cases}
|SA| / c > L_{\mathrm{SA}} / c  - T_{\mathrm{Delay}}^A - T_{\mathrm{PC}}^A\\
|SB| / c > L_{\mathrm{SB}} / c  - T_{\mathrm{Delay}}^B- T_{\mathrm{PC}}^B
	\end{cases}
\label{Eq:SC2}
\end{equation}

As shown in Figure~\ref{Fig:SpaceTimeSupp}, Alice's and Bob's random bit generation events for input setting choices are outside the future light cone (green shade) of entanglement creation event at the source.

\begin{figure}[htbp]
\centering
\resizebox{10cm}{!}{\includegraphics{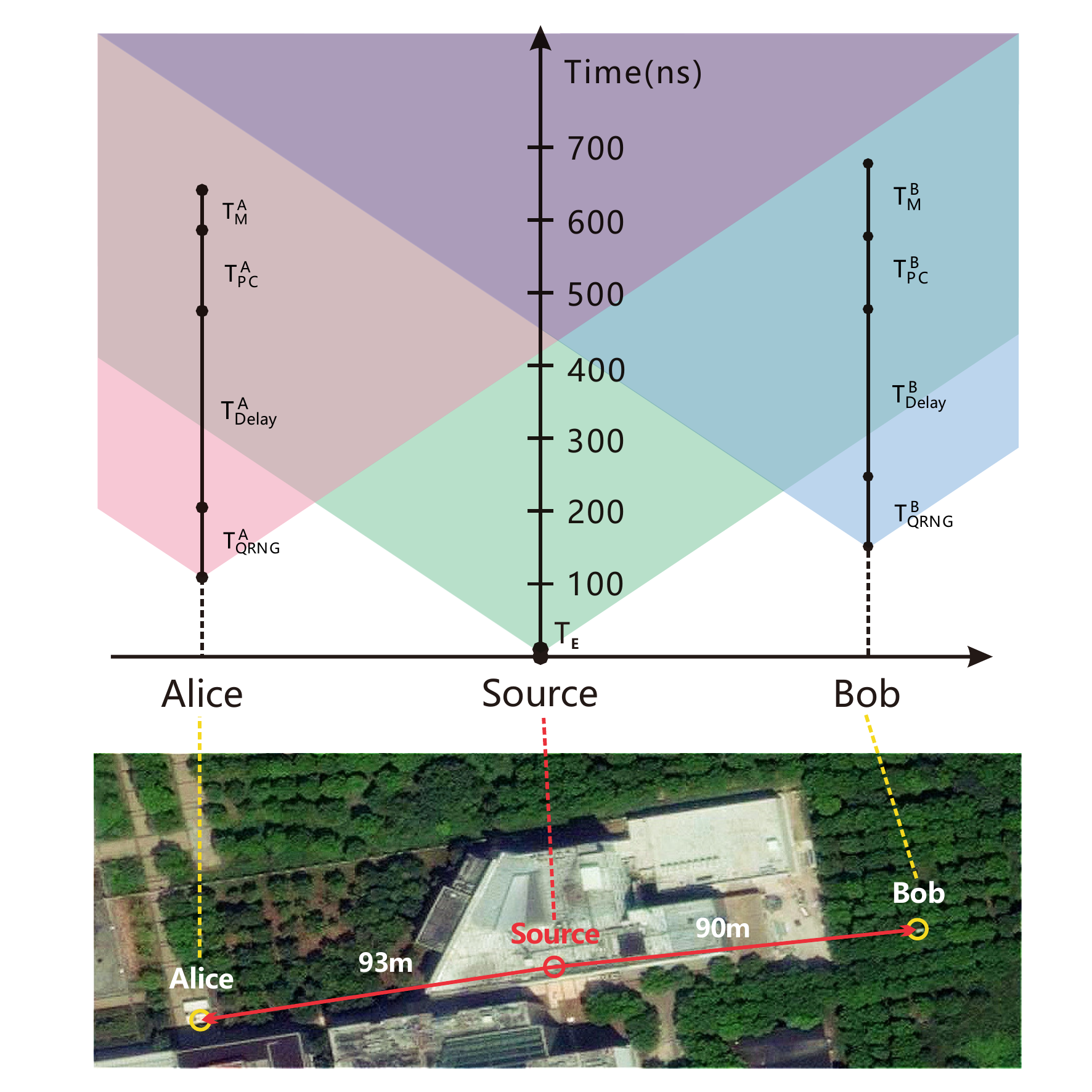}}
\caption{{\bf Spacetime analysis of the test rounds (space-like experiment).} $T_{\mathrm{E}}=10$ ns is the time elapse to generate a pair of entangled photons.  $T_{\mathrm{QRNG}}^{A,B}$ is the time elapse to generate random bits to switch the Pockels cell. $T_{\mathrm{Delay}}^{A,B}$ is the delay between QRNG and the Pockels cell. $T_{\mathrm{PC}}^{A,B}$ is the time elapse for the Pockels cell to be ready to perform state measurements after receiving the random bits from the QRNG. $T_{\mathrm{M}}^{A,B}$ is the time elapse for the SNSPD to output an electronic signal. For $T_{\mathrm{QRNG}}^{A}=T_{\mathrm{QRNG}}^{B}=96$ ns, $T_{\mathrm{Delay}}^{A}=270$ ns and $T_{\mathrm{Delay}}^{B}=230$ ns, $T_{\mathrm{PC}}^{A}=112$ ns and $T_{\mathrm{PC}}^{B}=100$ ns, $T_{\mathrm{M}}^{A}= 55$ ns and $T_{\mathrm{M}}^{B}= 100$ ns, we place Alice's measurement station and Bob's measurement station on the opposite side of the source and $93\pm1$ ($90\pm1$) meter from the source, and set the effective optical length between Alice's (Bob's) station and the source to be 130 m (118 m). This arrangement ensures space-like separation between measurement event and distant base setting event and between base setting event and photon pair emission event.}
\label{Fig:SpaceTimeSupp}
\end{figure}

\section{Parameter determination and Experimental Results}

\subsection{Parameter determination}

The implementation of the protocol depends on several parameters which we can choose in advance. Firstly, we pre-determine the testing probability $\gamma$ based on a $10$ minute-Bell test at a repetition frequency of $200$~ KHz for both experiments. The counts are summarized in Table~\ref{tab:training}. With $\wexp=0.750809$ and $\wexp=0.752487$, the optimal testing probability $\gamma_{\mathrm{opt}}$ is basically independent of soundness error $\epsound$ and completeness error $\epcomplete$ as shown in Table~\ref{tab:optgamma}. The amount of randomness expansion we achieve depends on the testing probability, $\gamma$. In Figure~\ref{Fig:fix} we show this behaviour, indicating the values chosen in our experiments (black plus, $\gamma=3.264\times10^{-4}$ and black cross, $\gamma=1.194\times10^{-4}$). Although they were chosen slightly sub-optimally, it is sufficient for our purposes.

With the corresponding $\wexp$ values above and choosing the soundness and completeness errors to be $\epsound=3.09\times10^{-12}$ and $\epcomplete=1\times10^{-6}$, randomness expansion is expected to be witnessed within $200$ hours for the space-like experiment and $12.5$ hours for the main experiment, respectively. To ensure success we run the experiments for slightly longer than this ($220$ hours and $19.2$ hours), corresponding to $3.168\times10^{12}$ and $1.3824\times10^{11}$ rounds, respectively. The randomness generation and consumption with time are shown in Figure~\ref{fig:expansion}. In Table~\ref{Tab:err} we show how the amount of randomness would vary by adjusting the completeness and soundness errors.

\begin{figure}[tbp]
\centering
\resizebox{8cm}{!}{\includegraphics{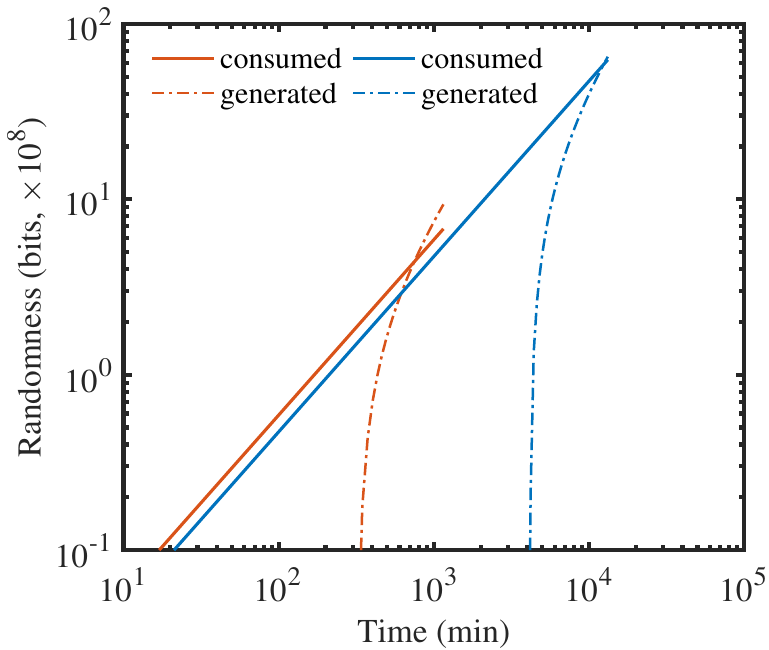}}
\caption{{\bf Randomness generated and consumed versus experimental time.} Red and blue represent the result of main and space-like experiment, respectively. The generated (dashed line) and consumed (smooth line) randomness in the experiment.}
\label{fig:expansion}
\end{figure}

\begin{table}[htbp]
\centering
\begin{tabular}{c|c|cccc}
\hline
Experiment                   & Basis settings & $ab=00$  & $ab=01$ & $ab=10$ & $ab=11$ \\ \hline
\multirow{4}{*}{space-like}  & $xy=00$        & 29172431 & 214574  & 181730  & 422697  \\
                             & $xy=01$        & 28732552 & 654944  & 134767  & 471554  \\
                             & $xy=10$        & 28711770 & 154239  & 637085  & 483897  \\
                             & $xy=11$        & 27868246 & 1043044 & 1033950 & 82496   \\ \hline
\multirow{4}{*}{main}        & $xy=00$        & 27591258 & 582051  & 566988  & 1262270 \\
                             & $xy=01$        & 26620499 & 1555007 & 341276  & 1488787 \\
                             & $xy=10$        & 26667606 & 366045  & 1485890 & 1478620 \\
                             & $xy=11$        & 24394111 & 2635885 & 2558367 & 405340  \\ \hline
\end{tabular}
\caption{{\bf Counts of training rounds.} Recorded number of two-photon detection events for four sets of polarization state measurement bases $x = 0$ or $1$ indicates ``$0-$'' or ``$1/2-$'' wave voltages for Pockels cell respectively and $a=1$ or $0$ indicates that Alice detects a photon or not, the same applies for $y$ and $b$ on Bob's side. The expected CHSH scores are $0.750809$ and $0.752487$ for space-like and main experiment, respectively.}
\label{tab:training}
\end{table}

\begin{table}[htbp]
\centering
\begin{tabular}{cc|cc|cc}
\hline
\multicolumn{2}{c|}{error parameters} &\multicolumn{2}{c|}{space-like experiment}      & \multicolumn{2}{c}{main experiment}            \\ \hline
$\epsound$        & $\epcomplete$     & $n_{\mathrm{min}}$   & $\gamma_{\mathrm{opt}}$ & $n_{\mathrm{min}}$   & $\gamma_{\mathrm{opt}}$ \\ \hline
$1\times10^{-6}$  & $1\times10^{-6}$  & $1.861\times10^{12}$ & $9.840\times10^{-5}$    & $5.761\times10^{10}$ & $3.393\times10^{-4}$    \\
$1\times10^{-6}$  & $1\times10^{-9}$  & $2.202\times10^{12}$ & $9.813\times10^{-5}$    & $6.808\times10^{10}$ & $3.395\times10^{-4}$    \\
$1\times10^{-6}$  & $1\times10^{-12}$ & $2.508\times10^{12}$ & $9.850\times10^{-5}$    & $7.747\times10^{10}$ & $3.394\times10^{-4}$    \\
$1\times10^{-9}$  & $1\times10^{-6}$  & $2.432\times10^{12}$ & $9.850\times10^{-5}$    & $7.533\times10^{10}$ & $3.393\times10^{-4}$    \\
$1\times10^{-9}$  & $1\times10^{-9}$  & $2.819\times10^{12}$ & $9.833\times10^{-5}$    & $8.724\times10^{10}$ & $3.393\times10^{-4}$    \\
$1\times10^{-9}$  & $1\times10^{-12}$ & $3.164\times10^{12}$ & $9.793\times10^{-5}$    & $9.782\times10^{10}$ & $3.394\times10^{-4}$    \\
$1\times10^{-12}$ & $1\times10^{-6}$  & $2.975\times10^{12}$ & $9.833\times10^{-5}$    & $9.221\times10^{10}$ & $3.394\times10^{-4}$    \\
$1\times10^{-12}$ & $1\times10^{-9}$  & $3.402\times10^{12}$ & $9.844\times10^{-5}$    & $1.053\times10^{11}$ & $3.393\times10^{-4}$    \\
$1\times10^{-12}$ & $1\times10^{-12}$ & $3.779\times10^{12}$ & $9.857\times10^{-5}$    & $1.169\times10^{11}$ & $3.393\times10^{-4}$    \\ \hline
\end{tabular}
\caption{{\bf Parameters for achieving randomness expansion with different expected scores.} For various values of the soundness error $\epsound$ and completeness error $\epcomplete$, $n_{\mathrm{min}}$ represents the minimal expected number of rounds witness expansion with the $\wexp$ for each setup using the optimized testing probability, $\gamma_{\mathrm{opt}}$.}
\label{tab:optgamma}
\end{table}

\begin{figure}[htbp]
\centering
\resizebox{16cm}{!}{\includegraphics{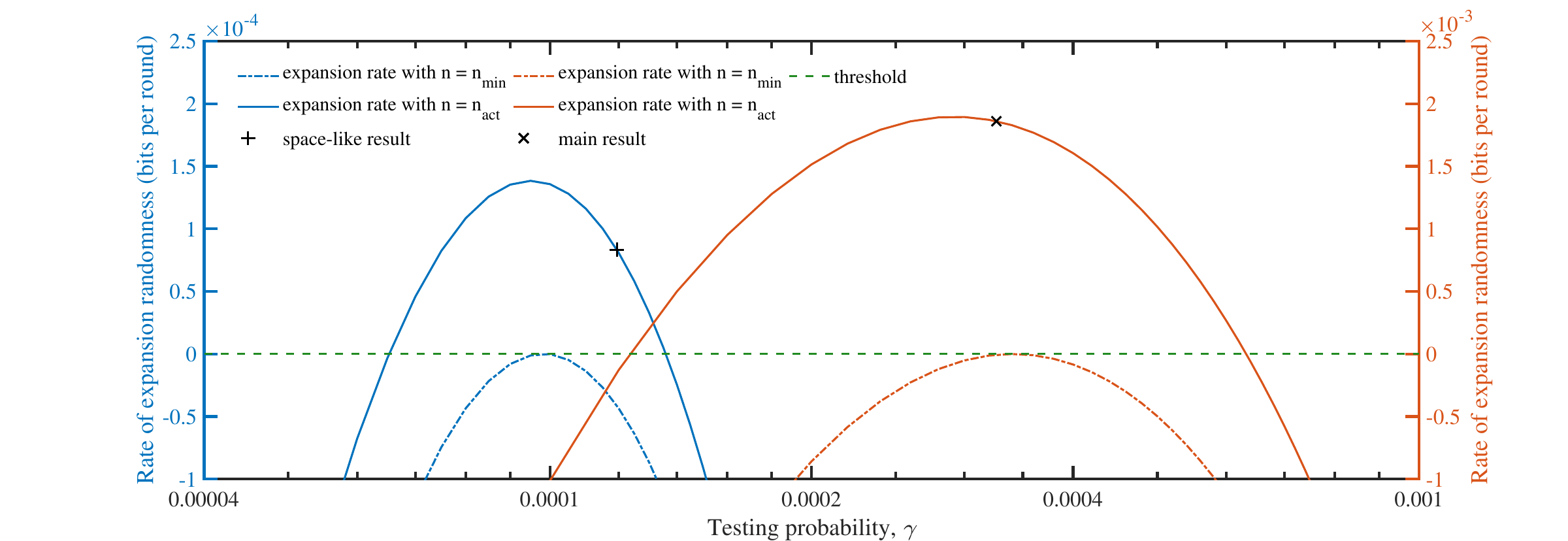}}
\caption{{\bf Expansion rate versus the value of $\gamma$.} With a fixed number of rounds and $\omega_{\mathrm{exp}}=0.750809$ (blue) or $\omega_{\mathrm{exp}}=0.752487$ (red), the expansion rate changes slowly around the optimal value of $\gamma$ (fixing the other parameters to those used in the protocol). The green dashed line is the threshold to witness expansion. The blue and red dashed lines represent the expansion rate with expected number of rounds $n_{\mathrm{min}}=2.888\times10^{12}$ and $n_{\mathrm{min}}=8.951\times10^{10}$, respectively. The blue and red smooth lines represent the expansion rate with actual number of rounds $n_{\mathrm{act}}=3.168\times10^{12}$ and $n_{\mathrm{act}}=1.3824\times10^{11}$, respectively. The black plus and cross indicate the value of $\gamma=1.194\times10^{-4}$ and $\gamma=3.264\times10^{-4}$ used in two different experiments, respectively.}
\label{Fig:fix}
\end{figure}

\begin{table}[htbp]
\centering
\begin{tabular}{c|c|ccccccc}
\hline
Experiment                  & \diagbox{$\epcomplete$}{$\epsound$}   & $10^{-8}$         & $10^{-9}$         & $10^{-10}$        & $10^{-11}$        & $10^{-12}$        & $10^{-13}$        & $10^{-14}$        \\ \hline
\multirow{7}{*}{space-like} & $10^{-3}$                             & $2.031\times10^9$ & $1.732\times10^9$ & $1.449\times10^9$ & $1.179\times10^9$ & $9.214\times10^8$ & $6.737\times10^8$ & $4.352\times10^8$ \\
                            & $10^{-4}$                             & $1.735\times10^9$ & $1.436\times10^9$ & $1.153\times10^9$ & $8.834\times10^8$ & $6.255\times10^8$ & $3.778\times10^8$ & $1.393\times10^8$ \\
                            & $10^{-5}$                             & $1.478\times10^9$ & $1.179\times10^9$ & $8.961\times10^8$ & $6.265\times10^8$ & $3.685\times10^8$ & $1.209\times10^8$ & --                \\
                            & $10^{-6}$                             & $1.248\times10^9$ & $9.492\times10^8$ & $6.661\times10^8$ & $3.965\times10^8$ & $1.386\times10^8$ & --                & --                \\
                            & $10^{-7}$                             & $1.038\times10^9$ & $7.393\times10^8$ & $4.563\times10^8$ & $1.866\times10^8$ & --                & --                & --                \\
                            & $10^{-8}$                             & $8.438\times10^8$ & $5.451\times10^8$ & $2.621\times10^8$ & --                & --                & --                & --                \\
                            & $10^{-9}$                             & $6.622\times10^8$ & $3.635\times10^8$ & $8.042\times10^7$ & --                & --                & --                & --                \\ \hline
\multirow{7}{*}{main}       & $10^{-3}$                             & $4.826\times10^8$ & $4.443\times10^8$ & $4.080\times10^8$ & $3.735\times10^8$ & $3.405\times10^8$ & $3.088\times10^8$ & $2.782\times10^8$ \\
                            & $10^{-4}$                             & $4.450\times10^8$ & $4.067\times10^8$ & $3.705\times10^8$ & $3.360\times10^8$ & $3.029\times10^8$ & $2.712\times10^8$ & $2.407\times10^8$ \\
                            & $10^{-5}$                             & $4.124\times10^8$ & $3.741\times10^8$ & $3.379\times10^8$ & $3.034\times10^8$ & $2.703\times10^8$ & $2.386\times10^8$ & $2.081\times10^8$ \\
                            & $10^{-6}$                             & $3.832\times10^8$ & $3.449\times10^8$ & $3.087\times10^8$ & $2.742\times10^8$ & $2.412\times10^8$ & $2.095\times10^8$ & $1.790\times10^8$ \\
                            & $10^{-7}$                             & $3.565\times10^8$ & $3.183\times10^8$ & $2.821\times10^8$ & $2.476\times10^8$ & $2.146\times10^8$ & $1.829\times10^8$ & $1.524\times10^8$ \\
                            & $10^{-8}$                             & $3.319\times10^8$ & $2.937\times10^8$ & $2.574\times10^8$ & $2.229\times10^8$ & $1.899\times10^8$ & $1.583\times10^8$ & $1.278\times10^8$ \\
                            & $10^{-9}$                             & $3.088\times10^8$ & $2.706\times10^8$ & $2.344\times10^8$ & $1.999\times10^8$ & $1.669\times10^8$ & $1.352\times10^8$ & $1.047\times10^8$ \\ \hline
\end{tabular}
\caption{Values of the net output that our experiment would achieve with varying completeness and soundness errors.}
\label{Tab:err}
\end{table}

\subsection{Experimental results}
The recorded experimental data are listed in Table~\ref{tab:output}. 
We assign $1$ for a detection event and $0$ for no detection. The CHSH score $\omega_{\mathrm{CHSH}}$ is given by
\begin{equation}
\omega_{\mathrm{CHSH}} = \sum_{k,l}{\sum_{i=1}^{n_{x_iy_i=kl}}{(1+(-1)^{a_i\oplus b_i\oplus(x_i\cdot y_i)})/n_{x_iy_i=kl}}},
\end{equation}
with $(k,l)\in(0,1)\times(0,1)$ is computed to be $0.750805$ and $0.752484$ in the space-like and main experiments, respectively. In Figure~\ref{Fig:vio}a, we show the measured CHSH violation value versus time. Due to the short time required for the experiment and the well stability of the whole system, we continuously carried out $19.2$ hours data collection without any break for calibrations. As a comparison, in Figure~\ref{Fig:vio}b, we also show the CHSH performance in the experiment with space-like separation. The experiment took a total about $2$ weeks to collect data, but only contains $220$ hours of valid data. Since the optical fibre links exposed in the field are easily affected by the environment, which will reduce the visibility of the whole system, we need to periodically recalibrate the system based on the observed CHSH scores. 

After randomness extraction using a $6.496\text{Gb} \times 3.17\text{Tb}$ (in this case because of the increased size of the data, we perform the extraction on the {\sc Viking} supercomputer) or a $0.935\text{Gb} \times 0.138\text{Tb}$ Toeplitz matrix, we obtain $6.496\times10^9$ and $9.350\times10^8$ genuinely quantum-certified random bits with the same uniformity within $3.09\times10^{-12}$, which are equivalent to $2.63\times10^8$ and $2.57\times10^8$ net bits after subtracting the randomness consumed, respectively. The two streams of random bits both pass the NIST statistical test suite (see Table~\ref{Tab:NISTtest} for details).

We have made all the final random output available at \href{https://www.dropbox.com/sh/hae9ht1cc426i5g/AABcuGgGyuNJMC0zOxIhOQBGa?dl=0} {{\tt https://tinyurl.com/qssxxaq}} so that it may be used for testing (the output has been split between several files for convenience of downloading).

\begin{figure}[htbp]
\centering
\resizebox{16cm}{!}{\includegraphics{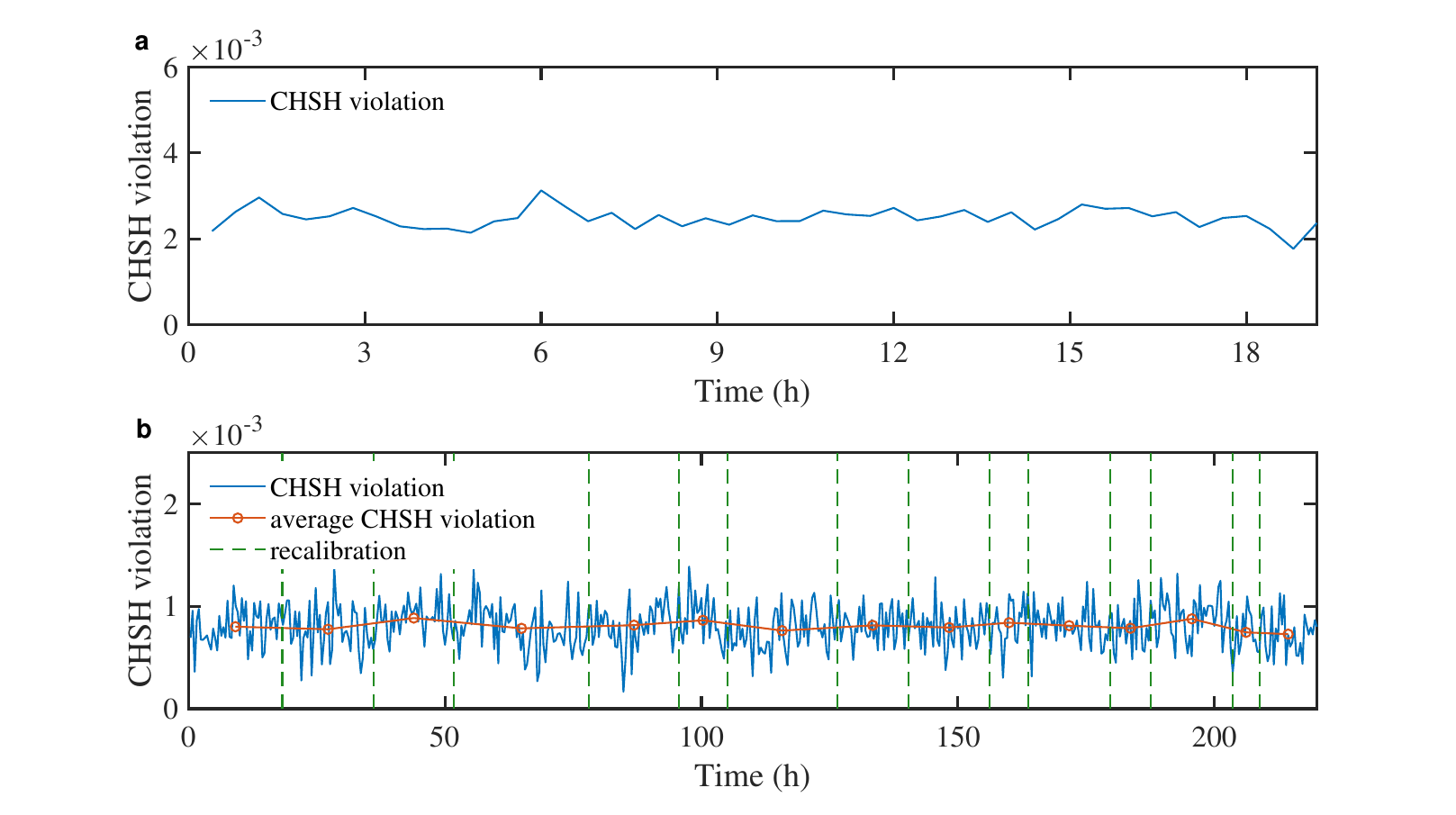}}
\caption{{\bf CHSH violation versus time.} For two experiments, we both choose the average value of CHSH violations for each $24$min data as a point to observe its performance over experimental time. {\bf a}: The main experiment. {\bf b}: The space-like experiment. The green dashed lines divide the experiment into a few segments, inbetween a recalibration was performed. The red circles represent the average value of CHSH violation between two adjacent recalibrations.}
\label{Fig:vio}
\end{figure}

\begin{table}[htbp]
\centering
\begin{tabular}{c|c|cccc}
\hline
Setup                        & Basis settings               & $ab=00$       & $ab=01$     & $ab=10$     & $ab=11$    \\\hline
\multirow{5}{*}{space-like}  & $xy=00\mathrm{(generation)}$ & 3079174741623 & 22815515154 & 19498193776 & 46131693660 \\
                             & $xy=00$                      & 91961904 	    & 681402 	  & 582328 	    & 1377758    \\
                             & $xy=01$                      & 90500314 	    & 2120124 	  & 457496 	    & 1503804    \\
                             & $xy=10$                      & 90460255 	    & 502741 	  & 2061782 	& 1557955    \\
                             & $xy=11$                      & 87602250 	    & 3356060 	  & 3353628 	& 263743     \\\hline
\multirow{5}{*}{main}        & $xy=00\mathrm{(generation)}$ & 126900146278  & 2775664347  & 2645179400  & 5873897572 \\
                             & $xy=00$                      & 10358320 	    & 226153 	  & 215067 	    & 480849     \\
                             & $xy=01$                      & 9982527 	    & 600342 	  & 128778 	    & 566330     \\
                             & $xy=10$                      & 10001267 	    & 141618 	  & 567085 	    & 564216     \\
                             & $xy=11$                      & 9130907 	    & 1013382 	  & 978197 	    & 157365     \\\hline

\end{tabular}
\caption{{\bf Counts of experimental rounds.} Recorded number of two-photon detection events for four sets of polarization state measurement bases $x=0$ or $1$ indicates ``$0-$'' or ``$1/2-$'' wave voltages for Pockels cell respectively and $a=1$ or $0$ indicates that Alice detects a photon or not, the same applies for $y$ and $b$ on Bob's side. The CHSH score are $0.750805$ and $0.752484$ for space-like and main experiment, respectively.}
\label{tab:output}
\end{table}

To check the statistical properties of our output, we run them through the NIST test suite~\cite{NIST_Tests}. To do so, we set the section length to $1$ Mbits for our $6.496\times10^9$ and $9.350\times10^8$ random output bits. As shown in Table~\ref{Tab:NISTtest}, the random bits successfully pass the tests.

\begin{table}
\centering
\begin{tabular}{c|ccc|ccc}
\hline
Experiment              & \multicolumn{3}{c}{space-like}                 & \multicolumn{3}{|c}{main}         \\\hline
Statistical   tests     & P value       & Proportion      & Result       & P value   & Proportion  & Result  \\\hline
Frequency               & 0.836586      & 0.990           & Success      & 0.363700  & 0.994       & Success \\
BlockFrequency          & 0.680822      & 0.991           & Success      & 0.672894  & 0.988       & Success \\
CumulativeSums          & 0.375465      & 0.990           & Success      & 0.121862  & 0.993       & Success \\
Runs                    & 0.085139      & 0.992           & Success      & 0.383642  & 0.994       & Success \\
LongestRun              & 0.535389      & 0.990           & Success      & 0.337586  & 0.993       & Success \\
Rank                    & 0.351484      & 0.991           & Success      & 0.891287  & 0.994       & Success \\
FFT                     & 0.170935      & 0.989           & Success      & 0.695112  & 0.986       & Success \\
NonOverlappingTemplate  & 0.142187      & 0.990           & Success      & 0.494828  & 0.990       & Success \\
OverlappingTemplate     & 0.374441      & 0.989           & Success      & 0.039521  & 0.986       & Success \\
Universal               & 0.996501      & 0.988           & Success      & 0.203199  & 0.988       & Success \\
ApproximateEntropy      & 0.525783      & 0.991           & Success      & 0.244667  & 0.990       & Success \\
RandomExcursions        & 0.120519      & 0.988           & Success      & 0.358633  & 0.986       & Success \\
RandomExcursionsVariant & 0.282519      & 0.991           & Success      & 0.500535  & 0.993       & Success \\
Serial                  & 0.261111      & 0.990           & Success      & 0.212314  & 0.993       & Success \\
LinearComplexity        & 0.669968      & 0.990           & Success      & 0.903456  & 0.985       & Success \\\hline
\end{tabular}
\caption{Results of the NIST test suite after dividing our output into $1$~Mbit sections.}
\label{Tab:NISTtest}
\end{table}

\section*{References}
\bibliographystyle{apsrev4-1}
\bibliography{RandBib}
\pagenumbering{gobble} 
\end{document}